\newcommand{\LSA}{{\operatorname{LSA}}}
\title{The Tree of Blobs of a Species Network: Identifiability under the Coalescent} 
\author{Elizabeth S. Allman \and Hector Ba\~nos \and Jonathan D. Mitchell \and John A. Rhodes }
\institute{ Elizabeth S. Allman \and John A. Rhodes \at Department of Mathematics and Statistics, University of Alaska Fairbanks, Fairbanks, AK, 99775, USA
	\and
	Hector Ba\~nos \at Department of Biochemistry \& Molecular Biology, Faculty of Medicine, Dalhousie University, Halifax, Nova Scotia, CANADA, and\\
	Department of Mathematics and Statistics, Faculty of Science, Dalhousie University, Halifax, Nova, Scotia, CANADA
	\and
	Jonathan D. Mitchell \at  Department of Mathematics and Statistics, University of Alaska Fairbanks, Fairbanks, AK, 99775, USA, and \\
	School of Natural Sciences (Mathematics), University of Tasmania, Hobart, TAS 7001, AUSTRALIA, and\\
ARC Centre of Excellence for Plant Success in Nature and Agriculture, University of Tasmania, Hobart, TAS 7001,\\ AUSTRALIA
	}
\begin{document}
 	
 \maketitle
 
\begin{abstract} Inference of  species networks from genomic data under the Network Multispecies Coalescent Model
is currently severely limited by heavy computational demands. It also remains unclear how complicated networks can be for consistent inference to be possible. As a step toward inferring a general species network, this work considers its tree
of blobs, in which non-cut edges are contracted to nodes, so only
tree-like relationships between the taxa are shown. An identifiability theorem, that most 
features of the unrooted tree of blobs can be determined from the distribution of gene quartet topologies, is established.
This depends upon an analysis of gene quartet concordance factors under the model, together with a new combinatorial
inference rule. The arguments for this theoretical result suggest a practical algorithm for tree of blobs inference, to be fully 
developed in a subsequent work.
\end{abstract}

\section{Introduction}

Methods for inference of evolutionary relationships between organisms
are well-developed provided those relationships can be adequately
described by a tree. If hybridization or some form of lateral
gene transfer has occurred, tools for data analysis are much more
limited. An essential complication is that when
such gene transfer has occurred between closely related taxa, the population-genetic effect of
incomplete lineage sorting is also likely. Thus individual gene
relationships may conflict with the primary tree-like
species relationships (if some can be considered to be primary) due to the intermixed effect of
these two processes.

The appropriate stochastic model to capture these processes is the
Network Multispecies Coalescent (NMSC).  Under the NMSC combined
with standard sequence substitution models, Bayesian methods for
inference of species networks have been implemented (BEAST 2/SpeciesNetwork \cite{Zhang2017}, 
PhyloNet \cite{ZhuEtAl2016,ZhuEtAl2018}, BPP \cite{Yang2019}). However,
they are limited by computational demands to small data sets
of few taxa and few genes. Pseudolikelihood methods that treat
inferred gene trees as data are able to handle larger data sets
(PhyloNet \cite{YuNakhleh2015}, SNaQ \cite{Solis-Lemus2016}), but 
require prespecification of the number of
reticulation events, with at best heuristic assessment of that
number. In addition, to reduce computational effort, inference may be limited to the class of level-1
networks, though a biological
justification for that may be lacking. 
A final approach starting with
inferred gene trees combines statistical tests for small networks with
combinatorial methods to assemble a large network (NANUQ \cite{ABR2019}). This is
considerably faster and offers some insight into model fit, but also is
currently limited to level-1 structure.

It is not known how complex a species network can be
for its inference from specific data types to be even theoretically possible. This is
the question of identifiability of the network (either topological or
metric) under the NMSC model: Does the distribution of observations under the NMSC uniquely determine the network? The most complete result in the level-1  topological case comes from
Ba\~nos' study  of identifiability from quartet concordance factors \cite{Banos2019}. Using different notions of data, however,
several works have 
studied the identifiability question for general
networks without the coalescent. Researchers have, for instance,  investigated what can be determined from average intertaxon distances on a network \cite{XuAne2021}, as well as shortest distances and distance multisets  \cite{VanIersalEtAl2020}. Identifiability from induced 4-taxon networks \cite{HuberEtAl2018}, rooted 3-taxon networks \cite{SempleToft2021}, and counts of paths from interior nodes to taxa \cite{ErdosEtAl2019} have also been explored, among other notions.
 
\medskip

In this work we approach the network inference problem from a different direction, trying to determine 
only the tree-like evolutionary relationships for a collection of taxa, hence isolating the parts of their history when more
complicated network features are formed. More formally, we study
the \emph{tree of blobs} of the network \cite{Gusfield2007}, a tree
in which each group of edges in the network describing complex gene
transfer, i.e., each blob, has been shrunk to a single node. (A closely related notion appears in \cite{MurakamiEtAl2019}.) The tree
of blobs thus shows all tree-like parts of the network, and its
inference could be useful to researchers who may subsequently focus on
inferring the structure of each blob by other methods.

Our goal here is to show the topology of the unrooted tree of blobs for a network is identifiable from gene quartet data
under the NMSC model. That is, the distribution of gene quartet topologies arising 
under the NMSC on a fixed species network uniquely determines the unrooted
tree of blobs of that network. We make no assumptions on blob structure, but do require that numerical parameters lie outside an exceptional set of measure zero.
Thus consistent inference of the tree of blobs is theoretically possible.

We first study  the probabilities of
quartets displayed across independent gene trees under the NMSC, under a generic assumption on numerical parameters. These probabilities --- the quartet concordance factors ($CF$s) --- allow for the identification of some sets of 4
taxa that must be collectively related through a blob, while proposing a
resolved quartet tree topology for others. 
A new combinatorial inference rule is then developed
that allows this information to be used to identify additional sets of four taxa related through a single blob, even though their $CF$s suggested otherwise. We show that repeated application of this rule yields
all sets of four taxa with blob relationships. Then, with all such blob quartets known, and tree topologies
assigned to other sets of four taxa, by treating blob quartets as unresolved we obtain complete information on all quartets displayed on the tree of blobs. 
This information is enough to determine the tree of blobs  \cite{Semple2005,Rhodes2020}.

Although rules for inference of large networks from 4-taxon networks have been considered previously \cite{HuberEtAl2018}, our rule is different in purpose. It neither assumes knowledge of the full 4-taxon blob structure, nor attempts to infer detailed blob structure on a larger network. Earlier work on quartet closure rules for trees, surveyed in \cite{GrunewaldHuber2007}, is also similar in spirit to the rule developed here. 

Our approach suggests an algorithm for tree of blobs inference that will be fully developed in
a subsequent paper focused on data analysis. First a statistical test  can be applied to gene quartet counts to detect blob and tree relationships on induced 4-taxon networks.
Then the inference rule is applied repeatedly, until no new blob relationships on the full network are inferred. Finally, the quartet intertaxon distance \cite{Rhodes2020} is computed treating blob relationships as unresolved. A standard distance-based tree building algorithm, such as Neighbor-Joining \cite{NJ87}, then yields an estimate of the tree of blobs.
This is broadly similar to the steps in NANUQ \cite{ABR2019} for inference of a level-1 network, but
the inference rule step is new, and the distance, which in principle should fit a tree, does not require an analysis by NeighborNet \cite{Bryant2004} or construction of a splits graph \cite{Dress2004}.

Many methods have been developed for a more detailed detection of hybridization or gene transfer than the tree of blobs depicts, e.g. \cite{BlischakEtAl2018,GreenEtAl2010,HamlinEtAl2020,HibbinsHahn2022}. Once the tree of blobs has been inferred for a collection of taxa, such methods might be applied to a subset of the taxa in order to explore the structure of a blob through a finer analysis. Unfortunately, these methods are generally restricted to a small number of taxa, and simple scenarios (e.g., level-1). Much work remains to be done to both expand the scope of methodology for inferring blob structure, and to delineate both theoretical and practical limits to its inference. 

\medskip

Our presentation is structured as follows. Section \ref{sec::networksmodels}
provides basic definitions and background on the NMSC model. In
Section \ref{sec:Blobquartets4} we prove the fundamental result that
from quartet concordance factors under the NMSC on a 4-taxon network
one can determine whether the taxa are related through a single
blob (i.e., a 4-blob), or not. If not, then all displayed trees on the 4-network have the same tree topology, which can also be determined.
Establishing these facts requires an analysis based in the NMSC
model.  In Section \ref{sec:BlobquartetsP}, we use combinatorial
arguments to show that from such information on the 4-taxon induced
subnetworks of a larger network we can, through certain inference
rules, gain information on all larger blobs.  
Section \ref{sec:main} quickly completes the argument for identifiability, and sketches the algorithm 
for tree of blobs inference suggested by the proof.

 \section{Networks and models}\label{sec::networksmodels}

\subsection{Phylogenetic networks}\label{sec::networks}

The Network Multispecies Coalescent model of gene tree formation
within a species network underlies this work, so we give an
appropriate definition of a phylogenetic network for that model.

\begin{definition} \label{def::network}
  \cite{Solis-Lemus2016,Banos2019} A \emph{topological rooted binary
    phylogenetic network} $\mathcal{N}^+$ on taxon set $X$ is a
  connected directed acyclic graph with nodes $V$ and edges $E$,
  where $V$ is the disjoint union $V = \{r\} \sqcup V_L \sqcup V_H
  \sqcup V_T$ and $E$ is the disjoint union $E = E_H \sqcup E_T$, together with
  a bijective leaf-labeling function $f : V_L \to X$ with the
  following characteristics:
	\begin{itemize}
		\item[1.] The \emph{root} $r$ has in-degree 0 and out-degree 2.
		\item[2.] A \emph{leaf} $v \in V_L$ has in-degree 1 and out-degree 0.
		\item[3.] A  \emph{tree node} $v\in  V_T$ has in-degree 1 and out-degree 2.
		\item[4.] A \emph{hybrid node} $v\in  V_H$ has in-degree 2 and out-degree 1.
		\item[5.] A \emph{hybrid edge} $e=(v,w) \in E_H$ is an edge whose child node $w$ is hybrid.
		\item[6.] A \emph{tree edge} $e=(v,w) \in E_T$ is an edge whose
		child node $w$ is either a tree node or a leaf.
	\end{itemize}
\end{definition}

See Figure \ref{fig::network}(L) for an example of a rooted binary phylogenetic network. In that figure, and in others throughout this work, red indicates hybrid nodes and the hybrid edges leading to them. 

\begin{figure}
\begin{center}
	\includegraphics{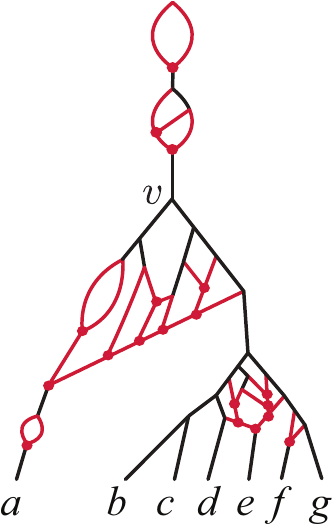} \hskip .9cm
	\includegraphics{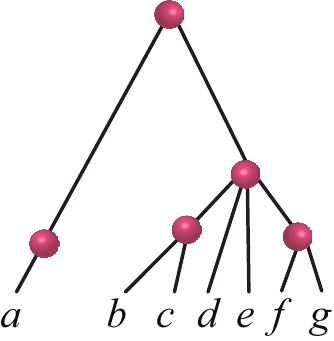} \hskip .9cm
	\includegraphics{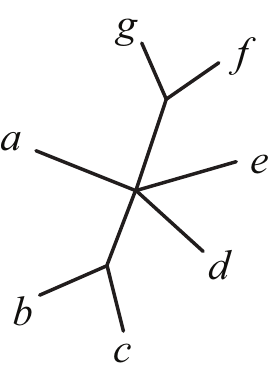}
\end{center}
\caption{(L) A species network $\mathcal N^+$, with edge lengths in
  coalescent units. Red indicates hybrid nodes and hybrid edges. The lowest stable ancestor (LSA) of the network is $v$. This network has 6 non-trivial blobs (a 5-blob, two 3-blobs, and three 2-blobs), and a single trivial 3-blob. (C) The tree-like structure of the LSA network $\mathcal N^\oplus$, obtained by deleting parts of the network above the LSA $v$, and showing blobs as red spheres. A sphere is used to suggest an unknown and potentially complicated blob structure. (R) The reduced unrooted tree of blobs, $T_{rd}(\mathcal
  N^-)$, obtained by shrinking blobs in the LSA network to nodes, unrooting, and suppressing degree-2 nodes. }\label{fig::network}
\end{figure}

\begin{definition}
A \emph{cut edge} in a graph is one whose deletion increases the number of connected components.
\end{definition}

Note that the notions of cut and non-cut edges are not the same as tree and hybrid edges.
Although a hybrid edge is never a cut edge, tree edges may or may not be cut edges. For instance, in Figure \ref{fig::network}(L), the child edges of $v$ are both tree edges and non-cut, while the parent edge of $v$ is tree and cut.

Edge directions on a rooted phylogenetic network induce a partial order on its
nodes. We say that a node $u$ is \emph{above} or \emph{ancestral to} a node
$v$, or $v$ is \emph{below} or \emph{descended from} $u$, if there is
a directed path in the network from $u$ to $v$. Thus the root is above
all other nodes. We use the same terms to refer to similar
relationships between edges, or between edges and nodes.

\smallskip

A topological network is one parameter of the
NMSC model. Additional numerical parameters are introduced by giving the
network a metric structure.  Edge lengths are measured in
\emph{coalescent units} (units of generations/population size). In
addition, we specify probabilities that a gene lineage at a hybrid
node follows one or another hybrid edge as it traces back in time
toward the network root. 

\begin{definition} A \emph{metric rooted binary phylogenetic network}
  $( \mathcal N^+, \{\ell_e\}_{e\in E},\{\gamma_e\}_{e\in E_H})$
  is a
  topological rooted binary phylogenetic network together with an
  assignment of weights or \emph{lengths} $\ell_e$ to all edges and
  \emph{hybridization parameters} $\gamma_e$ to all hybrid edges
  subject to the following restrictions:
	\begin{itemize}
		\item[1.] The length $\ell_e$ of a tree edge $e\in E_T$ is positive.
		\item[2.] The length $\ell_e$ of a hybrid edge $e\in E_H$ is non-negative.
		\item[3.] The hybridization parameters $\gamma_e$  and $\gamma_{e'}$ for a pair of hybrid edges $e,e'\in E_H$ with the same child hybrid node are positive and sum to 1.
	\end{itemize}
\end{definition}

Our use of the term hybridization parameter does not imply the NMSC model only applies to describing hybridization in any strict biological sense; it is simply a convenient shorthand for a parameter quantifying gene flow. In some works these parameters are called \emph{inheritance probabilities} \cite{Solis-Lemus2016}.

Note that we require tree edges to have positive length, since lengths of zero would effectively allow networks to be non-binary. Since zero lengths are non-generic in the parameter space, our formal statements of results holding for generic parameters would need no modification if they were allowed, though perhaps they would be more open to misinterpretation. We
do explicitly allow hybrid edges to have length 0,
to model possibly instantaneous jumping of a lineage from one
population to another. A careful reading of our arguments shows that while such values are also non-generic, they do not lead to additional points in the exceptional set of non-generic points where our claims fail.

\smallskip

The following analog of the most recent common ancestor of taxa on a tree is needed.

\begin{definition}\label{def::LSA}\cite{Steel2016}
  Let $\mathcal{N}^+$ be a (metric or topological) rooted binary
  phylogenetic network on $X$ and let $Z\subset V$ be any nonempty
  subset of the nodes of $\mathcal{N}^+$. Let $D$ be the set of nodes
  which lie on every directed path from the root $r$ of
  $\mathcal{N}^+$ to any $z\in Z$. Then the \emph{lowest stable
    ancestor (LSA) of $Z$ on $\mathcal{N}^+$}, denoted
  \emph{LSA}$(Z,{\mathcal{N}^+})$, is the unique node $v\in D$ such
  that $v$ is below all $u\in D$ with $u\neq v$. 
  
  The LSA of the network, $LSA(\mathcal N^+)$, is the LSA of its leaves, $LSA(V_L,\mathcal N^+)$.
\end{definition}

As shown in Figure \ref{fig::network}(L), a rooted phylogenetic network
may have a complex structure above its LSA. (If the
network is level-1, this is a chain of 2-cycles, as discussed in
\cite{Banos2019}.)  Since our methods based on gene quartets do not
give us any information about structure above the LSA, we focus only
on the structure below the LSA, sometimes with edge direction information lost.

To formalize this,
by suppressing a
node with both in- and out-degree 1 in a directed graph we mean
replacing it and its two incident edges with a single edge from its
parent to its child. Suppressing a degree-2 node between two
undirected edges means replacing it and its two incident edges with a
single undirected edge. Suppressing a node between an undirected edge and a directed
out-edge means replacing it and its two incident edges with a
single edge with the out-edge direction. Suppressing a node between a directed in-edge
and an undirected edge
means replacing it and its two incident edges with a
single undirected edge.
In all these situations, for a metric graph the new edge is assigned a
length equal to the sum of lengths of the two replaced. If the out-edge
was hybrid, the new edge is also hybrid and retains the hybridization
parameter.

\begin{definition}\label{def::undirected} \cite{Banos2019}
Let $\mathcal{N}^+$ be a (metric or topological) rooted binary
phylogenetic network on $X$.
\begin{enumerate}
\item  The \emph{LSA network} $\mathcal N^{\oplus }$ induced from $\mathcal N^+$ is the network
obtained by deleting all edges and nodes above $\LSA( \mathcal{N}^+)$, and designating $\LSA(\mathcal{N}^+)$ as
the root node.		
\item  The \emph{semidirected unrooted network} $\mathcal N^-$ is the unrooted network
obtained from the LSA network $\mathcal{N}^\oplus$ by undirecting all tree edges and suppressing the root,
but retaining directions of hybrid edges.		
\end{enumerate}
\end{definition}

We often need to pass to a network on a subset of taxa from one on a larger set.

\begin{definition} \label{def:induced} Let $\mathcal{N}^+$ be a
(metric or topological) rooted binary phylogenetic network on $X$
and let $Y\subset X$. The \emph{induced rooted binary network}
$\mathcal{N}^+_{Y}$ on $Y$ is the network obtained from
$\mathcal{N}^+$ by retaining only those nodes and edges ancestral to
one or more taxa in $Y$, and then suppressing all nodes with both
in- and out-degree 1.  We then say $\mathcal{N}^+$ \emph{displays}
$\mathcal{N}^+_{Y}$.
\end{definition} 

\subsection{Cycles, blobs, and  quartets} \label{subsec:blobquartets}

Since rooted phylogenetic networks are acyclic by definition, we use
the word \emph{cycle} to refer to a sequence of edges in the network
which forms a cycle when all edges are undirected.  
A \emph{$k$-cycle} is a cycle composed of $k$ edges.
 
Although we focus on phylogenetic networks, the following definition applies more broadly.

\begin{definition} A \emph{blob} on a network is a maximal connected
 subnetwork that has no cut edges.  A blob is \emph{trivial} if it consists of a single node. An edge in the 
 network is said to  be \emph{incident} to a blob if exactly one of its incident nodes is in
the blob.  A blob has \emph{degree $m$} or is an \emph{$m$-blob} if a) it has has exactly $m$ cut edges
incident to it and the network's root is not in the blob, or b) it has exactly $m-1$ cut edges incident to it and the root is in the blob. 
\end{definition}

We define an $m$-blob in this way for two reasons: First, it results in
the degree of the blob containing the LSA not changing in passing from a rooted network $\mathcal N^+$ to its LSA network $\mathcal N^\oplus$. Second, the NMSC model considers an ``above the root" population of infinite duration in which lineages may coalesce. This is essentially an additional edge, of infinite length, incident to the root. In our terminology if the root of a binary network is a trivial blob, then it is a degree-2 node but forms a degree-3 blob. 
  
A network's blobs can equivalently be defined as the 2-edge-connected components \cite{XuAne2021}, or as the connected components obtained by deleting all cut edges in the network.

On a rooted binary phylogenetic tree, leaves are the only 1-blobs, while the root 
and internal nodes are 3-blobs. On a non-binary tree,
polytomous nodes are $k$-blobs with $k\ge 4$.  Non-tree
phylogenetic networks may have $k$-blobs that are not nodes for any
$k> 1$. The simplest blobs have the form of cycles, and a network with only such blobs is level-1. In general,
however, blob structure may be much more complicated, with a few simple examples shown in Figures \ref{fig::network}(L) and \ref{fig::blobs}.
 
 \begin{figure}[h]
	\begin{center}
		\includegraphics{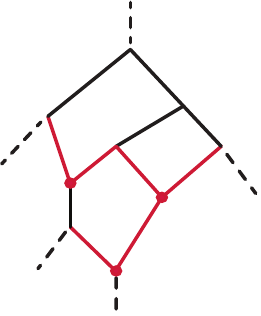} \hskip 1cm
		\includegraphics{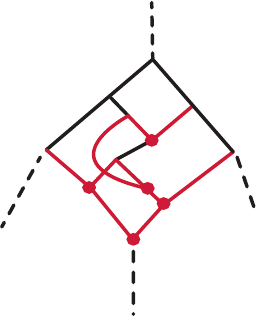} \hskip 1.5cm
		\includegraphics{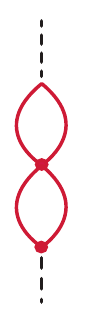}
	\end{center}
       \caption{Examples of blobs in networks. Red indicates hybrid nodes, and hybrid edges above them. Cut edges 
       	incident to the blobs are represented by dotted line segments:
      (L) a planar 5-blob, (C)) a non-planar 4-blob, (R) a single
    2-blob in a non-binary network, formed from two 2-cycles
     sharing a single node.}\label{fig::blobs}
\end{figure}

\medskip

As is well known, on a tree any 3 taxa determine a unique node where undirected paths between each pair of taxa meet, or equivalently a node whose deletion leaves the taxa in distinct connected components. If the tree is not binary, larger sets of taxa may or may not  determine a node in this way. The following definition formalizes a similar notion for networks.
 
\begin{definition}
A blob is \emph{determined by} a set of leaf labels $S$ with $|S|\ge3$ if deletion
  of the cut edges incident to the blob leaves the elements of $S$ in distinct connected components.
\end{definition}

On a network $\mathcal N^+$ every subset of 3 taxa 
determines a blob, and every $m$-blob with $m\ge 3$ that is below the LSA of $\mathcal N^+$ is determined by one or
more subsets of 3 taxa. Blobs above the LSA are not determined  by any subset of taxa, while an $m$-blob containing the LSA is
determined by 3 taxa if $m\ge4$.

A set of $k\ge 4$ taxa may or may not determine a
blob, but if it does it must be an $m$-blob with $m\ge k$.  For instance, the network of Figure \ref{fig::BTquartets}
has a 5-blob determined by the sets $\{a,b,c\}$,  $\{a,b,d,f\}$, and others. The set $\{a,b,e,f\}$, however, does not determine a blob.

\begin{figure}
\centering
\includegraphics[width=2in]{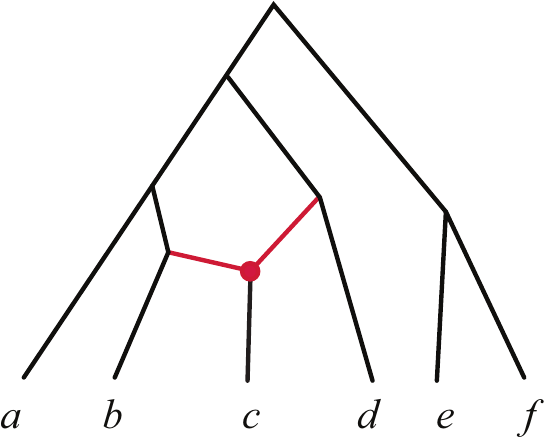}
\caption{A network with a 5-blob determined by the sets $\{a,b,c\}$,  $\{a,b,d,f\}$, 
	and other sets. The set $\{a,b,e,f\}$, however, does not determine a blob. 
	Both $\{a,b,c,d\}$ and $\{a,b,d,e\}$ are B-quartets on this network. 
	While $\{a,b,c,d\}$ is also a B-quartet on its induced 4-taxon network, 
	$\{a,b,d,e\}$ is a T-quartet on its induced 4-taxon network.}\label{fig::BTquartets}
\end{figure}

Note that our definition of blob differs slightly from that given by
\cite{Gusfield2007}, in which a blob is a maximal set of edges formed
by recursively including cycles sharing at least one edge with an
earlier cycle. By that definition, if two cycles share only a node as in Figure \ref{fig::blobs}(R),
they would be considered to be 2 distinct blobs. In contrast, they form a
single blob under our definition. In \cite{Gusfield2007}, this situation is handled by
inserting an edge to separate two such cycles, joining each at the
node they formerly shared, thus making edge-disjoint cycles also
node-disjoint. Our restriction to binary networks
rules out this possibility regardless.

\begin{definition} A \emph{chain of blobs} in a  network is a subnetwork composed of a sequence of 1- and 2-blobs connected by their incident edges.
\end{definition}

This notion generalizes the chain of 2-cycles defined for level-1 networks in
\cite{Banos2019}.  A chain of blobs will have  1- or 2-blobs at its ends, but all other blobs in the chain will be 2-blobs.
Just as a level-1 phylogenetic network may have a chain of 2-cycles above its LSA, a general phylogenetic 
network will have a (possibly empty) chain of blobs as the subnetwork between its root
and LSA, as in Figure \ref{fig::network}(L).

\begin{definition} \cite{Gusfield2007} The \emph{tree of blobs},
  $T(\mathcal N)$, for a general connected network, $\mathcal N$, is
  the tree obtained by contracting each blob to a node, that is, by
  removing all of the blob's edges and identifying all its nodes.
If the network is rooted, the tree of blobs remains rooted at the
  same node, or the one arising from identifying the original root
  with other nodes.
\end{definition}

 An equivalent construction of a blob tree in \cite{XuAne2021} has nodes for each blob in $\mathcal N$, with edges connecting them if there is an edge  with endpoints in the two blobs in $\mathcal N$.  
 
The tree of blobs is generally not binary, even when the network is. A blob with $m$ incident cut edges  in a network produces an $m$-multifurcation in its tree of blobs.
 Nodes of degree 4 or more in the tree of blobs indicate non-trivial blobs for a binary network, while those of degree 2 or 3 may correspond to trivial or non-trivial blobs in the network.

\smallskip

While this definition of a tree of blobs applies to an arbitrary
connected network, a slight variant is more useful here, as only
some of the features of the tree of blobs for a species network may be
identified by our methods and data. Any 2-blobs become
nodes of degree 2 in the tree of blobs, but we will suppress these
since we cannot detect them. Also, while we cannot detect any structure between the root of the network and its LSA, even after
suppressing nodes of degree 2 arising from blobs above the LSA, an undetectable edge above the LSA might remain. We therefore discard this as well.

\begin{definition}
  The \emph{reduced rooted tree of blobs}, $T_{rd}(\mathcal N^+)$, of
  a rooted phylogenetic network $\mathcal N^+$ is obtained from the
  tree of blobs $T(\mathcal N^\oplus)$ of the LSA network by
  suppressing all nodes of both in-degree 1 and out-degree 1. The
  \emph{reduced unrooted tree of blobs}, $T_{rd}(\mathcal N^-)$, of
  $\mathcal N^+$ is obtained from the tree of blobs $T(\mathcal N^-)$
  of the unrooted semidirected network by suppressing all nodes of
  degree 2.
\end{definition}

See Figure \ref{fig::network} for an example of a network and its reduced unrooted tree
of blobs.  The reduced unrooted tree of blobs  $T_{rd}(\mathcal N^-)$ is undirected since the only directed edges in
$\mathcal N^-$ are hybrid edges, which are in blobs, and thus lost when passing to its tree of blobs.

The reduced unrooted tree of blobs $T_{rd}(\mathcal N^-)$ can also be
obtained from the rooted one $T_{rd}(\mathcal N^+)$ by undirecting all edges, and either
suppressing the root if it has degree 2 (as a node) or dropping its designation as
the root if it has larger degree.

Note that if the  $\LSA$ of the original phylogenetic network $\mathcal N^+$
lies in an $m$-blob, $m\ge 4$, that blob gives only an $(m-1)$-multifurcation in the
reduced unrooted tree of blobs. If the $\LSA$ lies in a 3-blob, then that blob will
be completely suppressed, and not represented by a node.

\medskip

We next introduce terminology to express the relationships a set of four taxa
might have to the blob structure of a network. We follow the standard
convention of using the word \emph{quartet}  to
mean a particular unrooted binary topological tree on four taxa. For
instance the quartet $ab|cd$ is the topology with cherries $\{a,b\}$
and $\{c,d\}$ separated by an internal edge.  The \emph{unresolved quartet}
is the star topology for the 4-taxon tree, denoted $abcd$. The
following additional terminology is also useful in the network setting.

\begin{definition}\label{def:Bquartet} A set of four taxa $Q=\{a,b,c,d\}$
  on an $n$-taxon phylogenetic network is a \emph{Blob quartet}, or
  \emph{B-quartet}, if there is a blob on the network which is determined
  by $Q$. \end{definition}

Equivalent conditions for  $Q=\{a,b,c,d\}$ being a B-quartet are 1) the deletion of all edges in
  a single blob leaves the elements of Q in four distinct connected
  components, and 2) the unresolved quartet $abcd$ is displayed on the tree of blobs
  $T_{rd}(\mathcal N^-)$.
   The blob referred to here may be an $m$-blob for any
  $m\ge 4$.

If $\{a,b,c,d\}$ is not a B-quartet, then in the tree of blobs there must be an edge whose deletion disconnects two of these
taxa from the others. Consequently, the tree of blobs displays a resolved
quartet tree for these taxa.

\begin{definition}\label{def:Tquartet}
  If a set of four taxa is not a B-quartet on an $n$-taxon phylogenetic network, $n\ge 4$, then it
  is a \emph{tree-like quartet}, or \emph{T-quartet}. The resolved
  quartet \emph{associated} to a T-quartet is that displayed on the tree of
  blobs $T_{rd}(\mathcal N^-)$.
\end{definition}

Note that the induced 4-taxon network on a T-quartet need not be a
tree, since the induced network on the four taxa may contain non-trivial
2-blobs and 3-blobs. However there can be no larger blobs. Nonetheless
this induced network is ``tree-like'' in the sense that it will have
a cut edge whose removal disconnects the four taxa into two groups of 2. 
Equivalently, every tree displayed on the 4-taxon network has the same 
resolved quartet topology. Thus any T-quartet on a large network is
also a T-quartet on the induced quartet network.

In contrast, the induced network on a B-quartet may or may not
have a 4-blob, and can even be a tree. In passing from a network to an induced network on fewer taxa,
blobs may split into smaller blobs, and in some cases reduce to
tree-like relationships. Indeed, this happens even in the level-1 case
with a single cycle of $k$ edges, $k\ge 5$. In Figure
\ref{fig::BTquartets}, for instance,  $\{a,b,d,e\}$ is a B-quartet in the full network, yet becomes a T-quartet
on the
induced 4-taxon network. 
However, $\{a,b,c,f\}$ is a B-quartet on both the full and the induced
networks. 
	
\subsection{Coalescent model on networks and quartet concordance factors}\label{ssec:NMSC} 
 
The formation of gene trees, tracking the ancestral relationships of
individual lineages within populations of ancestral species, is
governed not only by the relationships of those species, but also
population-genetic effects. Going backwards in time, these lead to
gene lineages merging not when they first enter a common ancestral
species, but rather further in the past. If they fail to merge before
entering an ancestral population with yet other lineages, the gene
tree relationships that form may differ from the species
relationships.  When the species relationships are described by a tree
rooted at a common ancestor, the \textit{multispecies coalescent
(MSC)  model}  is the standard probabilistic description of gene tree
formation capturing this process \cite{Pamilo1988,MSCpaper2009}.

The \textit{network multispecies coalescent  (NMSC) model} 
\cite{Meng2009,Nakhleh2012,ZhuEtAl2016} generalizes the MSC,
allowing a finite number of hybridization events, or other discrete
lateral gene transfer events, between ancestral populations.  Its
parameters are captured by a metric, rooted phylogenetic network,
assumed here to be binary, as in Definition \ref{def::network}. Edge
lengths are given in coalescent units (computed as number of
generations/population size), so that the rate of coalescence between
two lineages is 1.  At a hybrid node in the network, a gene lineage
may pass into either of two ancestral populations, with probabilities
given by the hybridization parameters $\gamma, 1-\gamma$ for the hybrid 
edges.  This differs from other generalizations of the MSC, such as the
structured coalescent, where gene flow may be continuous over a time
interval.

The NMSC model determines a distribution of binary metric gene trees,
and, through marginalization, distributions of binary topological gene
trees on subsets of taxa. In this work we use only one type of
marginalization, to unrooted binary topological gene trees on subsets
of four taxa, or \emph{gene quartets}.  The probability of a gene quartet
is thus a function of the metric species network parameters under the
NMSC.  Formulas for these probabilities were obtained in the tree case
in \cite{Allman2011}, and for level-1 networks in
\cite{Solis-Lemus2016}, with further study in \cite{Banos2019}. Here
we do not restrict to level-1 networks, and without any
assumptions on blob structure one cannot obtain precise
formulas for gene quartet probabilities. Nonetheless, some features of
these probabilities can be analyzed sufficiently for application to determining
the tree of blobs of the network.

\begin{definition}
  Let $\mathcal N^+$ be a metric rooted binary phylogenetic network on a taxon set $X$,
  and $a,b,c,d\in X$ distinct taxa.  Then for the gene quartet
  $ab|cd$, the \emph{quartet concordance factor} $CF_{ab|cd} =CF_{ab|cd}
  (\mathcal N^+)$ is the probability under the NMSC on $\mathcal N^+$
  that a gene tree displays the quartet $ab|cd$. The \emph{quartet
    concordance factor for  taxa $a,b,c,d$}, or more simply the \emph{concordance factor}, is the ordered triple
  $$CF_{abcd} =CF_{abcd}(\mathcal N^+)=(CF_{ab|cd},CF_{ac|bd},CF_{ad|bc})$$
	of concordance factors of each quartet on the taxa.
\end{definition}

Since under the NMSC gene trees are binary, and all gene tree topologies have positive probability, the entries of $CF_{abcd}$ are positive and sum to 1. Note that permuting $a,b,c,d$ permutes the entries of
$CF_{abcd}$. Nonetheless, when $a,b,c,d$ are clear from context, such
as when $|X|=4$, we write $CF$ for $CF_{abcd}$.

In \cite{Allman2011} it was shown that if the species network is a
tree then two of the three entries of $CF_{abcd}$ must be equal, with
the third no smaller. We need the following broader notion.

\begin{definition} \label{def:cutCF} The concordance factor $CF_{abcd}$ is a \emph{cut $CF$} if two
  of its entries are equal, and \emph{strictly cut} if in addition the
  third is distinct.  If $CF_{abcd}$ is strictly cut with
  $CF_{ab|cd}\neq CF_{ac|bd}=CF_{ad|bc}$, then we say $CF_{abcd}$ is
  \emph{strictly $(ab|cd)$-cut}. If $CF_{abcd}$ is not cut, we say it
  is \emph{non-cut}.
\end{definition}

The term ``cut" is motivated by Theorem \ref{thm:CFdetect}
of the next section, which states that for generic parameters a $CF$ is cut exactly
when there is a cut edge in the 4-taxon network whose deletion from
the network leaves two connected components each with two taxa.

\smallskip

We emphasize that Definitions \ref{def:Bquartet} and
\ref{def:Tquartet} of B- and T-quartets refer to the relationship of 4
taxa through the topology of a specified network, while Definition \ref{def:cutCF}
of cut and non-cut $CF$s refers to properties of the probability
distribution under the NMSC. In passing to an induced network, B-quartets may become T-quartets, although $CF$s remain unchanged.
 
 Theorem \ref{thm:CFdetect} below shows that on 4-taxon networks there
is a close correspondence between B-quartets and non-cut $CF$s. However, these notions are more subtly related on larger networks. For the
network of Figure \ref{fig::BTquartets}, for instance, $\{a,b,d,e\}$ is a B-quartet yet has a
strictly cut $CF$. This issue is the main obstacle to showing identifiability of
the tree of blobs, to be overcome with Theorem \ref{thm:infrule} below.

\section{Blob quartet identifiability on 4-networks}\label{sec:Blobquartets4}

We work under the NMSC model, so that specification of model
parameters through a metric rooted binary phylogenetic network
determines a distribution of $n$-taxon gene trees, and by
marginalization, the theoretical quartet $CF$s for each subset of four
taxa.

Although our ultimate goal is to identify the reduced unrooted tree of
blobs of a rooted phylogenetic network from the $CF$s, with no
assumption on level or other particular network structure, our approach to
doing this is by first determining B-quartets. In this section we
show that by applying certain inference rules,  all
B-quartets on 4-taxon networks can be identified from the $CF$s, assuming generic values of numerical
parameters. 

By \emph{generic} numerical parameters we mean all those that lie outside of a subset of measure zero in the
parameter space. While we
do not give an explicit description of such an exceptional set, a good
intuitive description that it has measure zero is that
if parameter values were chosen at random from an absolutely continuous distribution,
then with probability 1 they would not be exceptional. For complex
stochastic models it is quite common for identifiability results to
depend upon the exclusion of some ``small'' exceptional subsets
of the parameter space \cite{AMR2009}.

\medskip

A basic combinatorial observation, whose proof we omit, is the following.
\begin{lemma} \label{lem:4netblobs} Let $\mathcal N^+$ be a 4-taxon rooted binary phylogenetic network. Then the semidirected  unrooted network $\mathcal
  N^-$ must have either
\begin{enumerate} 
\item exactly one 4-blob, or
\item  exactly two 3-blobs.
\end{enumerate}
In either case, $\mathcal
  N^-$ may have any number of 2-blobs, but no other non-leaf
blobs. In case 1, the reduced unrooted tree of blobs
$T_{rd}(\mathcal N^-)$ is the unresolved quartet tree and the taxa
form a B-quartet. In case 2,  $T_{rd}(\mathcal N^-)$ is a resolved quartet tree and the
taxa form a T-quartet.
\end{lemma}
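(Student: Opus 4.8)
The plan is to analyze the structure of a 4-taxon rooted binary network $\mathcal N^+$ by passing to its semidirected unrooted form $\mathcal N^-$ and counting cut edges. First I would recall that the blobs of $\mathcal N^-$ are precisely the 2-edge-connected components, equivalently the pieces left after deleting all cut edges. Contracting each blob to a point yields the tree of blobs $T(\mathcal N^-)$, whose edges are exactly the cut edges of $\mathcal N^-$ and whose leaves are the four taxa (each taxon sits in a 1-blob, its pendant leaf). The key is that $T(\mathcal N^-)$ is an unrooted tree with exactly four leaves, and that all its internal nodes have degree $\ge 3$ — actually $\ge 3$ once we have suppressed degree-2 nodes, but before suppression a degree-2 node in $T(\mathcal N^-)$ corresponds exactly to a 2-blob in $\mathcal N^-$. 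So the combinatorial heart is: an unrooted tree with four labeled leaves, after suppressing degree-2 nodes, is either the star $K_{1,3}$-type quartet tree (one internal node of degree 4) or a resolved quartet tree (two internal nodes of degree 3). That dichotomy is elementary — an internal node cannot have degree $\ge 5$ since total degree is $2|E|$ and there are only four leaves and the tree has at most $4-2=2$ internal nodes after suppression.

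Next I would translate this back to blob language. The single degree-4 internal node of the reduced tree of blobs corresponds to a blob $B$ in $\mathcal N^-$ incident to four cut edges, i.e., a $4$-blob; this is case 1, and by definition deleting the cut edges incident to $B$ separates the four taxa, so $\{a,b,c,d\}$ is a B-quartet, and $T_{rd}(\mathcal N^-)$ is the unresolved quartet. Two degree-3 internal nodes correspond to two blobs $B_1,B_2$, each incident to three cut edges — two leading toward taxa and one the shared internal cut edge — hence two 3-blobs; this is case 2, $T_{rd}(\mathcal N^-)$ is the resolved quartet determined by which pairs of taxa attach to a common $B_i$, and $\{a,b,c,d\}$ is a T-quartet by Definition \ref{def:Tquartet}. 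That no other non-leaf blob type can occur, except arbitrarily many 2-blobs strung along cut edges (which become suppressed degree-2 nodes), follows from the same degree count: every non-leaf blob is incident to $\ge 2$ cut edges, a 2-blob gives a suppressible degree-2 node, and a blob with $\ge 3$ incident cut edges must be one of the $\le 2$ internal nodes of the quartet tree.

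The one point requiring a little care — and the likely main obstacle — is ruling out the degenerate possibility that $\mathcal N^-$ has \emph{no} cut edge separating the taxa into a $2{+}2$ split while still not being a single blob: e.g., could the tree of blobs be a ``caterpillar'' with an internal degree-2 node between two degree-3 nodes in a way that does not reduce to one of the two cases? It does not, because that degree-2 node is precisely a 2-blob and gets suppressed, merging back to the resolved quartet of case 2; I would make explicit that the suppression operation is well-defined here because in $\mathcal N^-$ the cut edges are undirected (all directed edges being hybrid, hence inside blobs), so the earlier suppression conventions apply cleanly. I would also note in passing that we must check the network genuinely has an LSA with the four leaves below it so that $\mathcal N^-$ is connected with the four taxa as its only leaves; this is immediate from Definition \ref{def::LSA} and the construction of $\mathcal N^-$. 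With these observations the lemma is a short combinatorial argument, which is presumably why the authors say its proof is omitted — but I would still record it as above rather than leave it entirely to the reader.
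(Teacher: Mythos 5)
The paper explicitly omits a proof of this lemma (``a basic combinatorial observation, whose proof we omit''), and your argument is a correct reconstruction of the intended one: the tree of blobs of $\mathcal N^-$ is a tree whose four leaves are the taxa, so its internal (blob) nodes of degree $\ge 3$ number at most two, forcing either one degree-$4$ node or two degree-$3$ nodes, with degree-$2$ nodes being exactly the suppressible $2$-blobs. Your added care about chains of $2$-blobs and about every non-leaf blob having at least two incident cut edges (no ``dead-end'' blobs, since every node is ancestral to a taxon) is exactly what is needed to make the omitted argument complete.
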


\begin{figure}[h]
\begin{center}
\includegraphics[width=4.in]{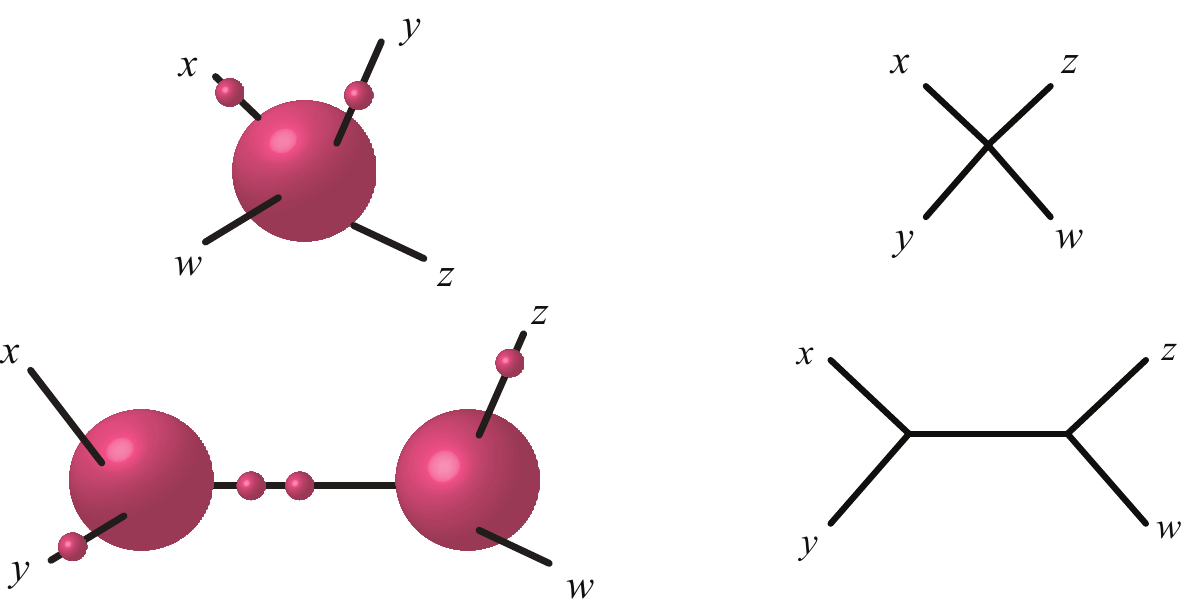}
 \caption{ (L) Schematic depictions of two semidirected unrooted  4-taxon networks $\mathcal N^-$, where spheres represent blobs of unspecified structure, and (R) their reduced unrooted trees of blobs $T_{rd}(\mathcal N^-)$. Up to taxon labelling, these are the only possible 4-taxon topological  reduced unrooted trees of blobs. }\label{fig::quartetblobs}
\end{center}
\end{figure}

As shown in \cite{Solis-Lemus2016,Banos2019}, for generic parameters on a 4-taxon level-1 network one can detect B-quartets directly from the single $CF$.
We next extend the 4-taxon result for level-1 networks to
arbitrary 4-blobs on 4-taxon networks.  

As illustrated in Figure
\ref{fig::quartetblobs}, we can determine the reduced unrooted tree of blobs of
a 4-taxon network by determining if it  has a cut edge inducing a
non-trivial split. If such a cut edge exists, the tree of blobs is a
quartet tree, and if it does not, the tree of blobs is a star
tree. That this feature can be detected by quartet concordance factors
is the content of the next proposition.

\begin{theorem} \label{thm:CFdetect} ($CF$-detectability of
  4-blobs on 4-taxon networks) Consider a 4-taxon  rooted binary
  phylogenetic network $\mathcal N^+$ on taxa $\{a,b,c,d\}$ with
  quartet concordance factor $CF=CF_{abcd}$ and reduced unrooted tree
  of blobs $T=T_{rd}(\mathcal N^-)$. Then under the NMSC for generic
  parameters:
\begin{enumerate}
\item $T$ has the quartet tree topology $ab|cd$ if, and only if, $CF_{abcd}$ is
  strictly $(ab|cd)$-cut.
\item $T$ has the unresolved quartet topology if, and only if,
  $CF_{abcd}$ is non-cut.
\end{enumerate}
\end{theorem}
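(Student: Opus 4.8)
The plan is to reduce the theorem to a small number of cases governed by Lemma \ref{lem:4netblobs}, and in each case to analyze the quartet concordance factor directly under the NMSC. By Lemma \ref{lem:4netblobs}, the semidirected network $\mathcal N^-$ has either exactly one 4-blob (so $T$ is the star tree and the taxa form a B-quartet) or exactly two 3-blobs (so $T$ is a resolved quartet tree, say $ab|cd$, and the taxa form a T-quartet), with arbitrary 2-blobs allowed in either case but nothing larger. Since a 2-blob is a chain feature that contributes an additive length in coalescent units along the path it sits on (each of its two internal paths from entry to exit node carries a well-defined coalescent length, and a lineage traversing it experiences an effective branch), I expect 2-blobs to be absorbable into adjacent edge lengths and thus not to affect the $CF$; I would make this reduction first, so that effectively the two cases are ``a single 4-blob with four pendant edges'' and ``two 3-blobs joined by a central cut edge, each with two pendant edges.''

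First, I would treat the T-quartet case. Here there is a central cut edge whose removal splits the taxa as $\{a,b\}\mid\{c,d\}$, and the network decomposes into two sub-pieces each of which, run under the NMSC, delivers a pair of lineages to the endpoints of that central edge with some coalescence possibly already having occurred. Conditioning on whether the $a,b$ lineages have coalesced below the central edge (and likewise for $c,d$), the standard MSC-on-a-tree computation applies above: the two-of-three-equal structure of \cite{Allman2011} is inherited because, once we are on the central cut edge with at most three lineages, the symmetry between the two ``wrong'' resolutions $ac|bd$ and $ad|bc$ is exact. So $CF_{abcd}$ is $(ab|cd)$-cut. The remaining point is strictness: $CF_{ab|cd}>CF_{ac|bd}$ must hold for generic parameters. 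This follows because the event driving $ab|cd$ includes the positive-probability event that $a,b$ coalesce before reaching the central edge, which strictly boosts $CF_{ab|cd}$ over the ``deep coalescence'' contribution shared by all three resolutions; the inequality could degenerate only on a measure-zero set of parameters, which goes into the exceptional set.

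Second, the B-quartet case: a single 4-blob with four pendant edges, and I must show $CF_{abcd}$ is generically non-cut, i.e. no two entries are equal. This is the heart of the argument and I expect it to be the main obstacle, since the 4-blob is of arbitrary, possibly non-planar structure (Figure \ref{fig::blobs}), so there is no closed-form $CF$. The strategy is a genericity/algebraicity argument rather than an explicit formula: fixing the topology of the 4-blob, each entry of $CF_{abcd}$ is a real-analytic function of the finitely many numerical parameters (edge lengths $\ell_e$ and hybridization parameters $\gamma_e$), obtained by integrating exponential coalescent densities against the finite branching structure. If $CF_{ab|cd}\equiv CF_{ac|bd}$ on an open set of parameters, then by analyticity this identity holds on all of parameter space; I would then derive a contradiction by choosing a convenient degenerate limit of the parameters — for instance sending certain hybrid edge lengths to $0$ and $\gamma$'s to $0$ or $1$ so that the 4-blob collapses to a level-1 cycle or even to a tree with a genuine $ab|cd$ or $ac|bd$ resolution — and invoking the already-known level-1 and tree results (\cite{Allman2011,Solis-Lemus2016,Banos2019}) which give a strictly cut $CF$ there. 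That contradiction, applied to each of the three possible pairwise equalities, shows that for generic parameters no two entries of $CF_{abcd}$ coincide, i.e. the $CF$ is non-cut. Collecting the two cases and noting they are exhaustive and mutually exclusive by Lemma \ref{lem:4netblobs} gives both biconditionals; the exceptional set is the (measure-zero) union of the loci where any of these analytic non-degeneracy conditions fails.
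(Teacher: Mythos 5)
Your overall architecture matches the paper's: the resolved case is handled by exchangeability of the $a,b$ lineages above their last chance to coalesce (giving $CF_{ac|bd}=CF_{ad|bc}$ unconditionally) plus a genericity argument for strictness, and the unresolved case is handled by analyticity plus degenerate parameter choices. Two smaller problems first. The claim that a 2-blob can be absorbed into an adjacent edge length is not justified (and is not needed): when three or more lineages traverse a 2-blob they may be split among different internal paths, so the joint coalescence distribution need not be that of a single population of any effective length; the paper instead simply sets the relevant lengths to zero in the one case where 2-blobs matter, which is legitimate because only a single witness point in the (analytically extended) parameter domain is required. Likewise, your strictness argument in the T-quartet case (``the early-coalescence event strictly boosts $CF_{ab|cd}$'') is not valid as stated, since the conditional probability of $ab|cd$ given no early coalescence could in principle be smaller than that of $ac|bd$ and cancel the boost; what is needed, and what the paper supplies, is an explicit witness (long cut edges push $CF_{ab|cd}$ toward $1$) combined with analyticity to conclude the equality locus has measure zero. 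You assert ``measure zero'' without exhibiting that witness.

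The genuine gap is in the B-quartet case. Your plan is to rule out each identity $CF_{\cdot}\equiv CF_{\cdot}$ by degenerating the 4-blob to a network with a resolved tree of blobs, but to kill all three identities you need degenerations realizing at least two \emph{distinct} resolutions: a limit with resolution $ab|cd$ only rules out the two identities involving $CF_{ab|cd}$ and leaves $CF_{ac|bd}\equiv CF_{ad|bc}$ untouched. You assert such degenerations exist (``a genuine $ab|cd$ or $ac|bd$ resolution'') without proof, and this is precisely the combinatorial core of the paper's argument: after stripping 2-blobs and redundant hybrid edges, it takes the lowest hybrid node, deletes each of its two hybrid edges in turn to obtain networks $\mathcal N_1^+,\mathcal N_2^+$ with resolved trees of blobs, and proves by contradiction (via the blob determined by the remaining three taxa and its incident cut edge) that the two resolutions cannot coincide, since otherwise the original network would contain a cut edge inducing $ab|cd$ and hence have no 4-blob. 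With that in hand, $CF_{abcd}=\gamma_1 CF(\mathcal N_2^+)+\gamma_2 CF(\mathcal N_1^+)$ is a convex combination of two differently-cut points and can be made non-cut. Without this step your strategy could, a priori, only ever produce limits with one and the same resolution, in which case one pairwise equality might hold identically and the conclusion would fail; you must supply this argument (or an equivalent one) for the proof to go through.
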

\begin{proof}

We prove the following statements, for generic parameters:
\begin{itemize}
\item[(a)] If $T$ has the quartet tree topology $ab|cd$, then $CF_{abcd}$ is
  strictly $(ab|cd)$-cut.
\item[(b)] If $T$ has the unresolved quartet topology, then
  $CF_{abcd}$ is non-cut.
\end{itemize}

Were it not for the distinction between ``cut" and ``strictly cut", these statements would immediately yield claims 1 and 2.
But since the parameters are assumed to be generic, this issue is easily overcome:
Statement (b) implies for generic parameters that if $CF_{abcd}$ is cut, then $T$ has a resolved tree topology, which by (a) implies that $CF_{abcd}$ is strictly cut. Thus for generic parameters $CF_{abcd}$ is cut if, and only if, it is strictly cut. 

To establish (a),
suppose $T$ is resolved, with topology $ab|cd$. Permuting
  taxon names if necessary, we may assume that the reduced rooted tree of
  blobs thus has topology $(((a,b),c),d)$, $((a,b),(c,d))$, or $((a,b),c,d)$.

  In the first case, $(((a,b),c),d)$, if a gene tree forms under the
  NMSC by the $a,b$ lineages coalescing below the 3-blob determined by
  $a,c,d$, it contributes to the frequency of unrooted gene quartets
  with topology $ab|cd$. Otherwise, $a,b$ enter that blob as
  exchangeable lineages, and $ac|bd$ and $ad|bc$ will be equally
  probable as unrooted gene quartets. Thus $CF$ is $(ab|cd)$-cut for
  all parameters. Moreover, if the cut edges in $\mathcal N^+$ are
  given a sufficiently large length, $CF_{ab|cd}$ can be made as close
  to 1 as desired, and hence distinct from the other $CF$
  entries. Since  $CF_{abcd}$ is an analytic function of parameters and one
  parameter choice leads to its being strictly $(ab|cd)$-cut, generic
  ones must as well (since any equality of analytic functions either
  holds everywhere, or only on a lower-dimensional subset of the
  domain).

  The remaining cases, of the reduced rooted trees of blobs $((a,b),(c,d))$ and  $((a,b),c,d)$, are
  similar. Any coalescence below the blob containing the LSA leads to gene trees
  $ab|cd$. If no such coalescence occurs, then upon entering the blob containing the LSA
  the lineages from $a,b$ are
  exchangeable, so $ac|bd$ and $ad|bc$ will be equally probable gene
  tree topologies, resulting in  $CF_{abcd}$ being $(ab|cd)$-cut. Considering
  sufficiently long cut edges in $\mathcal N^+$ again shows the $CF$ is strictly
  cut generically.
 
\medskip

To prove (b), suppose $T$ has the unresolved topology, so that $\mathcal N^-$ has a 4-blob.
Again using the analyticity of  $CF_{abcd}$ it is enough to show
there is a single choice of numerical parameters that gives a non-cut $CF$.
We can even choose these parameters to be on the boundary of the
stochastic parameter space, since the analytic
parametrization of the $CF$s extends to a larger open set. We now 
show such a parameter choice exists, with some edge lengths and hybridization parameters 0.

If there are any 2-blobs on $\mathcal N^-$, set
all edge lengths in $\mathcal N^+$ that give rise to them equal to 0, with hybrid parameters
arbitrary. Doing so, we have effectively removed these blobs, and may
thus assume there are no 2-blobs  in $\mathcal N^-$. By Lemma \ref{lem:4netblobs}, the only non-leaf blob in $\mathcal N^-$ is a 4-blob.

To further simplify the network, choose some total order for the nodes in  $\mathcal N^+$ consistent with
the partial order arising from the edge directions, with the root
highest. Focus on the lowest hybrid node in this order, and its hybrid
edges $h_1,h_2$. Consider deleting one of the $h_i$  from $\mathcal N^+$ and with it all
edges from which the only directed path to a taxon leads through
$h_i$, suppressing any degree-2 nodes.  If the semidirected unrooted network of the resulting network still has a
4-blob, then set $\gamma_i=0$ and lengths for the removed edges to be
arbitrary, so that we effectively consider a network with one fewer
hybrid nodes. Its semidirected unrooted network may have 2-blobs as well as the 4-blob, but
after repeatedly `removing' 2-blobs and one of the lowest hybrid edges
in the 4-blob by setting certain parameters to 0, we arrive at a
network such that $\mathcal N^-$ still has a single 4-blob and no other blobs, but for
which removing either of $\mathcal N^+$'s lowest hybrid edges $h_1,h_2$, in
this way gives a semidirected unrooted network with no 4-blobs. We henceforth assume our
network $\mathcal N^+$ has this property.

If $v$ is the lowest hybrid node on $\mathcal N^+$, then the
subnetwork below $v$ must be a tree. But since 
$\mathcal N^-$ has no 3-blobs, this tree can only have one leaf,
and hence is a single edge. By permuting taxon names, we assume the
leaf below $v$ is labelled $a$.  Removing from $\mathcal
N^+$ either of the $h_i$, and edges above it as described
earlier, gives connected subnetworks $N_i$ which by suppressing degree-2 nodes give phylogenetic networks $\mathcal N_i^+$. Moreover,  the semidirected unrooted 
networks  $\mathcal N_i^-$ each have exactly two 3-blobs, and possibly 2-blobs. By further
permuting taxon names we may assume $\mathcal N_1^+$ has reduced
unrooted tree of blobs topology $ab|cd$.

For the sake of contradiction, suppose $\mathcal N_2^+$'s reduced
unrooted tree of blobs also has topology $ab|cd$. Consider the subnetwork $N_3$ on $b,c,d$ obtained from $\mathcal N^+$ by
deleting $a$ and all edges above $a$ that are not above any other
taxa. Then $N_3$ is a subnetwork of both $N_1$ and
$N_2$ which has a blob  $\mathcal B$ determined by the 3 taxa $b,c,d$. Let $e$ denote the cut edge of $N_3$ incident to
$\mathcal B$ through which undirected paths to $b$ pass.  
Now $e$ must be a cut edge in both $N_1$ and $N_2$, inducing the split $ab|cd$ in both.
Thus every edge in $\mathcal N^+$ which
is incident to $N_3$ and ancestral to only the taxon $a$ must be attached to
$N_3$ in the $b$-component of $N_3\smallsetminus
\{e\}$. But this implies that $e$ is a cut edge of $\mathcal
N^+$ inducing the split $ab|cd$, a contradiction to the existence of a 4-blob on $\mathcal N^-$.  Thus $\mathcal N_2^+$ has a reduced
unrooted tree of blobs topology that is resolved, but not $ab|cd$. We
henceforth assume this topology is $ac|bd$.

To pick values for the remaining parameters note that since
$a$ is the only taxon below the hybrid node $v$,
$$CF_{abcd}(\mathcal N^+)=\gamma_1 CF_{abcd}(\mathcal N_2^+)+\gamma_2 CF_{abcd}(\mathcal N_1^+).$$
where $\gamma_1, \gamma_2=1-\gamma_1$ are the hybridization parameters for $h_1,
h_2$.  Moreover, by (a) we
have that $CF(\mathcal N_1^+)$ is strictly $(ab|cd)$-cut and
$CF(\mathcal N_2^+)$ is strictly $(ac|bd)$-cut for generic parameters.
Thus by first choosing the numerical parameters other than
$\gamma_1,\gamma_2$ on $\mathcal N^+$ to yield such generic parameters
on the $\mathcal N_i^+$, we may then pick values of $\gamma_1,\gamma_2$
so that $CF_{abcd}(\mathcal N^+)$ is non-cut.  Thus
$CF_{abcd}(\mathcal N^+)$ is generically non-cut.  \qed \end{proof}

Applying this proposition to quartet $CF$s from large networks gives the following.

\begin{corollary}\label{cor:largeCFdetect} Let $\mathcal N^+$ be a metric rooted binary phylogenetic network on taxa $X$, $|X|\ge 4$, with generic
numerical parameters.
Then under the NMSC, for each 4-taxon subset $Q\subseteq X$,  the topology of the reduced unrooted tree of blobs on
  the induced network $T_{rd}(\mathcal N^-_Q)$ is identifiable from $CF_Q$.
\end{corollary}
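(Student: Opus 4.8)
The plan is to obtain this as a direct consequence of Theorem~\ref{thm:CFdetect} applied to each induced $4$-taxon network. Two things need checking. First, that the quartet concordance factor for $Q$ computed on $\mathcal N^+$ agrees with the one computed on the induced network $\mathcal N^+_Q$, i.e.\ $CF_Q(\mathcal N^+)=CF_Q(\mathcal N^+_Q)$. Second, that generic numerical parameters on $\mathcal N^+$ restrict to generic numerical parameters on $\mathcal N^+_Q$, for all of the finitely many $4$-element subsets $Q\subseteq X$ at once. Granting these, the identification rule is immediate: reading $CF_Q$, if it is non-cut return the unresolved quartet for $T_{rd}(\mathcal N^-_Q)$, and if it is strictly $xy|zw$-cut return $xy|zw$; for generic parameters on $\mathcal N^+_Q$ the ``cut but not strictly cut'' case does not occur, and Theorem~\ref{thm:CFdetect} certifies that this rule is correct.

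For the first point I would appeal to the standard marginalization property of the NMSC. The gene tree law is produced by the coalescent run along the network, and lineages interact only within common ancestral populations; deleting the edges ancestral solely to taxa outside $Q$ removes lineages the $Q$-lineages never encounter, while suppressing the resulting degree-$2$ nodes merely concatenates consecutive populations, summing their coalescent lengths and carrying hybridization parameters along unchanged. Neither operation changes the gene tree law restricted to $Q$, so $CF_Q(\mathcal N^+)=CF_Q(\mathcal N^+_Q)$; this is precisely the construction of $\mathcal N^+_Q$ in Definition~\ref{def:induced}. For the bookkeeping in the second point, let $\Theta(\mathcal M)$ denote the space of metric parameters of a topological network $\mathcal M$. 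Forming $\mathcal N^+_Q$ induces a polynomial map $\phi_Q\colon\Theta(\mathcal N^+)\to\Theta(\mathcal N^+_Q)$: each edge of $\mathcal N^+_Q$ is given the sum of the lengths of the (pairwise disjoint) collection of edges of $\mathcal N^+$ that merged to form it, and each hybrid node of $\mathcal N^+_Q$ --- which is necessarily a hybrid node of $\mathcal N^+$, since suppressing degree-$2$ nodes never merges in-degree-$2$ nodes --- keeps its hybridization parameter. Since each length coordinate of $\Theta(\mathcal N^+_Q)$ can be moved independently by perturbing one of its constituent edges, and each surviving $\gamma$ ranges freely over $(0,1)$, the image of $\phi_Q$ contains a nonempty open subset of $\Theta(\mathcal N^+_Q)$.

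The only genuine subtlety, I expect, is completing this genericity transfer. Theorem~\ref{thm:CFdetect} provides, for each possible induced topology, a measure-zero exceptional set $E_Q\subseteq\Theta(\mathcal N^+_Q)$; such a set lies in the zero locus of a nonzero real-analytic function and so has empty interior. As the image of $\phi_Q$ has nonempty interior, it is not contained in $E_Q$, hence the pullback along the analytic map $\phi_Q$ of an analytic function vanishing on $E_Q$ is not identically zero, and therefore $\phi_Q^{-1}(E_Q)$ is a measure-zero subset of $\Theta(\mathcal N^+)$. Taking the union of these sets over the finitely many $Q$ yields a measure-zero exceptional set in $\Theta(\mathcal N^+)$; away from it, $\phi_Q(\theta)$ is generic for $\mathcal N^+_Q$ for every $Q$, so Theorem~\ref{thm:CFdetect} recovers $T_{rd}(\mathcal N^-_Q)$ from the cut/non-cut type of $CF_Q(\mathcal N^+_Q)=CF_Q(\mathcal N^+)$. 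This is the usual pattern of identifiability holding off a small exceptional set, cf.\ \cite{AMR2009}; apart from it I do not expect further difficulties beyond careful bookkeeping.
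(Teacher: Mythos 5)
Your proposal is correct and takes essentially the same route as the paper, which likewise deduces the corollary by applying Theorem~\ref{thm:CFdetect} to each induced $4$-taxon network and noting that the finitely many measure-zero exceptional sets pull back to a measure-zero exceptional set on the $n$-taxon parameter space. You simply spell out the two steps the paper leaves implicit (the NMSC marginalization identity $CF_Q(\mathcal N^+)=CF_Q(\mathcal N^+_Q)$ and the genericity transfer along the parameter map $\phi_Q$), and both are handled correctly.
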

\begin{proof} By Theorem \ref{thm:CFdetect}, for generic numerical parameter values on each induced 4-taxon
  network we have $CF$-detectability of a B-quartet or
  T-quartet.  Since
  the generic conditions only exclude a set of measure zero from the
 numerical parameter space of each 4-taxon network, they give rise to a generic
  condition on numerical parameter values on the $n$-taxon network ensuring that
  $CF$-detectability holds on all induced 4-taxon networks.   \qed\end{proof}

We now characterize more fully  the set of $CF$s that arise on
4-networks $\mathcal N^+$ whose trees of blobs are resolved.
Suppose $\mathcal N^+$ has taxa
$a,b,c,d$, and reduced unrooted tree of blobs $T_{rd}({\mathcal N^-})$ with
quartet topology $ab|cd$.  If $\mathcal N^+$ is a resolved tree, then
\cite{Allman2011} showed $CF_{ab|cd}$ may take on any value in the interval
$(1/3,1)$. If $\mathcal N^+$ is level-1, then \cite{Banos2019} showed
$CF_{ab|cd}$ may take on any value in $(1/6, 1)$. The
following generalizes these results to arbitrary networks.

\begin{proposition}\label{prop:Nk} Let $\mathcal N^+$ be a 4-taxon rooted binary phylogenetic network
  whose reduced tree of blobs has quartet topology $ab|cd$. Then under
  the NMSC the $CF$ is $ab|cd$-cut,
  with $$CF_{abcd}=(CF_{ab|cd},CF_{ac|bd},CF_{ad|bc})=(p,q,q),$$
  where $0<p,q<1$, $p+2q=1$.  Conversely, every such triple $(p,q,q)$
  arises as the $CF$ from such a network.
\end{proposition}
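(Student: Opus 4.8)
\noindent\emph{Proof proposal.} The first assertion is essentially already established. The argument proving statement~(a) of Theorem~\ref{thm:CFdetect} shows that whenever the reduced unrooted tree of blobs of $\mathcal N^+$ is the quartet $ab|cd$, the concordance factor $CF_{abcd}$ is $(ab|cd)$-cut for \emph{every} choice of numerical parameters, so $CF_{abcd}=(p,q,q)$ with $CF_{ab|cd}=p$ and $CF_{ac|bd}=CF_{ad|bc}=q$; since under the NMSC all three unrooted gene quartet topologies have positive probability and their probabilities sum to $1$, we get $0<p,q$ and $p+2q=1$, hence also $p,q<1$. For the converse it suffices, by the first assertion, to produce for each $p\in(0,1)$ a $4$-taxon rooted binary network whose reduced unrooted tree of blobs is $ab|cd$ and with $CF_{ab|cd}=p$, since the full $CF$ is then forced to be $(p,q,q)$. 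For $p\in(1/3,1)$ a resolved binary tree with topology $ab|cd$ works: as its internal edge length $t$ ranges over $(0,\infty)$ the standard formula $CF_{ab|cd}=1-\tfrac23 e^{-t}$ (see \cite{Allman2011}) sweeps out exactly $(1/3,1)$. The remaining values, $p\le 1/3$, require genuinely reticulate networks, which one obtains by attaching reticulate blob structure on one side of the cut edge realizing the split $ab|cd$, so that no $4$-blob appears and the tree of blobs is unchanged.

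The crux is to construct, while keeping the tree of blobs equal to $ab|cd$, networks with $CF_{ab|cd}$ arbitrarily close to $0$. The mechanism available is that under the NMSC two gene lineages entering a hybrid node choose their parent edges independently, so a suitably parametrized blob on the $\{a,b\}$-side can shunt a positive fraction of gene quartets onto the two ``incorrect'' resolutions; by the first assertion it must do so symmetrically, yielding a $CF$ of the form $(p,q,q)$ with small $p$. Starting from the level-$1$ construction of \cite{Banos2019}, which already realizes $CF_{ab|cd}$ down to values near $1/6$, one iterates --- chaining (or nesting) several such reticulate blobs along the $\{a,b\}$-side --- so that each additional blob drives the infimum of the attainable $CF_{ab|cd}$ strictly closer to $0$. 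Each such family varies over a connected numerical parameter space on which $CF_{ab|cd}$ is a continuous (indeed analytic) function, so each image is an interval; since the first assertion confines every attainable value to $(0,1)$ while values arbitrarily near $0$ and near $1$ are attained, the intermediate value theorem shows that the union of these images is all of $(0,1)$.

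The main difficulty is this last construction: one must exhibit explicit reticulate $4$-taxon networks whose reduced unrooted tree of blobs is still $ab|cd$ but whose $CF_{ab|cd}$ is as small as desired, and carry out (at least asymptotically, under the iteration) the coalescent computation needed both to check that the tree of blobs is unaffected and that each added blob genuinely lowers the achievable $CF_{ab|cd}$ toward $0$. The tree case and the intermediate-value glueing are routine by comparison.
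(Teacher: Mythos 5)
Your first paragraph is fine and matches the paper: the forward direction is exactly statement~1 of Theorem~\ref{thm:CFdetect} (indeed the argument for (a) there gives the cut property for all parameters), positivity of all gene quartet probabilities gives $0<p,q<1$ with $p+2q=1$, the tree case covers $p\in(1/3,1)$, and the connectedness/intermediate-value glueing at the end is the same device the paper uses. The problem is that the entire mathematical content of the converse lies in the step you defer: exhibiting $4$-taxon networks whose reduced tree of blobs is still $ab|cd$ but whose $CF_{ab|cd}$ is arbitrarily close to $0$. You describe a strategy (chain or nest level-1-style blobs on the $\{a,b\}$-side, each one ``driving the infimum strictly closer to $0$'') but you neither construct these networks nor verify the coalescent computation, and you explicitly flag this as ``the main difficulty.'' That is a genuine gap, not a routine verification: even granting that each added blob strictly lowers the attainable infimum, a strictly decreasing sequence of infima need not converge to $0$, so your iteration as stated cannot conclude. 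Note that a single level-1 blob bottoms out at $1/6$ by \cite{Banos2019}; nothing in your sketch shows that stacking such blobs pushes past any particular bound.

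The paper closes this gap with an explicit family $\mathcal N(k)^+$: a single $3$-blob containing an inverted binary tree of $2^k$ hybrid edges (all hybridization parameters $1/2$), with short internal lengths $\epsilon$ and long separating lengths $M$. The point of the construction is quantitative: for $\epsilon\approx 0$ and $k\gg 0$ the $a,b$ lineages end up on \emph{different} long branches with probability $\alpha\approx p/(2-p)\to 1$, and conditional on that event, taking $M\gg 0$ forces the gene tree to match a population tree of shape $(((c,a),b),d)$ or $(((c,b),a),d)$, so $CF_{abcd}\to(0,1/2,1/2)$. It is this computation --- showing the ``wrong'' resolutions can absorb probability arbitrarily close to $1$ --- that your proposal asserts but does not supply. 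To repair your write-up you would need either this construction or an equally explicit one, together with the accompanying probability estimate; without it the converse is only proved for $p$ in $(1/6,1)$.
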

\begin{proof} By statement 1 of Theorem \ref{thm:CFdetect},
it only remains to establish the final claim, that every
  triple $(p,q,q)$ with $p,q>0$, $p+2q=1$, arises as the $CF$ of a
  network of the sort described.  We do this by constructing a
  sequence of topological networks $\mathcal N(k)^+$, $k\in \mathbb Z^+$, such
  that a triple $(p,q,q)$ arises as a $CF$ on $\mathcal N(k)^+$ for
  sufficiently large $k$ and certain numerical parameters.

\begin{figure}
	\begin{center}
		\includegraphics[scale=.6]{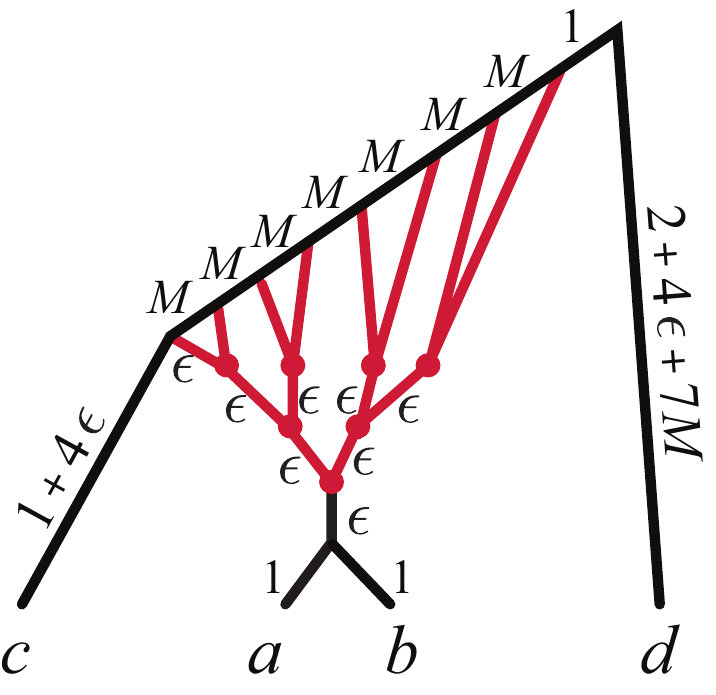}
	\end{center}
	\caption{An instance of the network $\mathcal N(k)^+$ used in the proof of Proposition \ref{prop:Nk}, 
		with $k=3$. All hybridization parameters are $1/2$, while $\epsilon$ and $M$ denote variable edge lengths. }\label{fig:Nk}
\end{figure}

The form of $\mathcal N(k)^+$ is shown in Figure \ref{fig:Nk} for
$k=3$.  Edges lead from the root to the taxon $d$ and to a 3-blob. The
other two edges incident to the 3-blob lead to $c$, and to a cherry of
$a,b$. The edge leading to the cherry has length $\epsilon$ and joins the
blob at a node that can be thought of as the `root' of an inverted binary
subtree of hybrid edges (shown in red in Figure \ref{fig:Nk}), inverted so that its edges are directed toward
this node. This binary subtree has $2^k$ `leaves', and all internal
edges of length $\epsilon$. The `pendant' edges of this subtree
have lengths $\epsilon, \epsilon+M,\epsilon+2M,\dots
,\epsilon+(2^k-1)M$, with the subtree `leaves' connected by a path of
edges all of length $M$. The pendant edges of the network $\mathcal N(k)^+$, and the
internal edge leading from the root  of $\mathcal N(k)^+$ to the 3-blob can be given any
fixed lengths, but for concreteness, we make the network ultrametric
by choosing the remaining internal edge to have length 1 and the
pendant edges to $a,b,c,d$ to be of lengths $1,1,1+(k+1)\epsilon,
2+(k+1)\epsilon +(2^k-1) M$, respectively. We set all
hybridization parameters equal to $1/2$.

Note that by Theorem \ref{thm:CFdetect}, $CF_{abcd}$ is strictly $(ab|cd)$-cut for $\epsilon>0$.

\smallskip

We next show that under the NMSC on $\mathcal N(k)^+$, for $\epsilon\approx 0$ and $k\gg 0$, with high probability the $a,b$ lineages
will be on different edges of the network
when they reach a height of $1+(k+1)\epsilon$ above the taxa.  
This event is the union of $k$ disjoint events, in which
the lineages follow the same path without coalescing to height $1+\ell\epsilon$ for any $\ell\in \{1,2,\dots, k\}$ at which point they diverge on different paths.
The probability of this for a specific $\ell$ is 
$(p/2)^{\ell},$ where $p=\exp(-\epsilon)$ is the probability two lineages do not coalesce
on an edge of length $\epsilon$.
Thus the 
probability of the full event is
\begin{equation*} 
\alpha =\sum_{\ell=1}^k (p/2)^\ell=\frac p2 \cdot \frac {1-(p/2)^k}{1-p/2}.
\end{equation*}
Taking  $\epsilon$ close to $0$ ensures $p$ is as close to 1 as desired.
Then choosing $k$ sufficiently large, the probability $\alpha$ can be made as close to $p/(2-p)$ as desired, and hence  arbitrarily close to 1.

Now $CF_{abcd}$ can be
 expressed as
$$CF_{abcd}=\alpha CF_1+(1-\alpha) CF_2,$$
where $CF_1$ is the $CF$ conditioned on the $a,b$ lineages being on
different edges at height $1+(k+1)\epsilon$ above the leaves, and $CF_2$ the $CF$
conditioned on the complementary event. To compute $CF_1$, note that
the conditioning ensures that all coalescent
events that can occur will have the same probability that they would if
they instead occurred
on a species tree with
topology $(((c,a),b),d)$ or a species tree with topology $(((c,b),a),d)$, with each of these trees
having equal probability. Moreover on these trees the length of the edge
ancestral only to the cherry is $m M$ for some $m\in \{1,2,\dots,2^k-1\}$. Thus by
choosing $M$ large enough, we can ensure with probability as close to
1 as we like that gene tree topologies will match the population tree, making
$CF_1$ as close to $(0,1/2,1/2)$ as desired. Now since
$\alpha$ can be made arbitrarily close to 1, we need not analyze $CF_2$
(beyond knowing its entries are bounded) to conclude that we can make $CF_{abcd}$
as close to $(0,1/2,1/2)$ as desired by choices of $\epsilon\approx 0$ and  $k,M\gg 0$.

Using the same fixed $k$, so the network topology is still that of
$\mathcal N(k)^+$, we could instead take $\epsilon \gg 0$, making the
probability of coalescence of $a,b$ on the edge above the $\{a,b\}$ cherry
as close to 1 as we like, so that $CF_{abcd}$ is arbitrarily close to
$(1,0,0)$. Since  $CF_{abcd}$ lies on the line of points of the form
$(q,p,p)$, $q+2p=1$ and is a continuous function of numerical
parameters, by connectedness of the numerical parameter space for
$\mathcal N(k)^+$, all intermediate points between the ones we found
arise as $CF_{abcd}$ for some parameters. \qed\end{proof} 

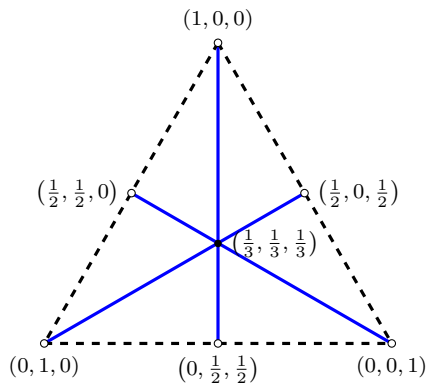
\begin{figure}
\centering
\begin{tikzpicture}
\draw[-,blue, very thick] (0,-4) -- (3.46,-2);
\draw[-,blue,very thick] (2.31,0) -- (2.31,-4);
\draw[-,blue,very thick] (4.62,-4) -- (1.16,-2);
\draw[-,dashed,very thick] (2.31,0) -- (0,-4);
\draw[-,dashed,very thick] (2.31,0) -- (4.62,-4);
\draw[-,dashed,very thick] (0,-4) -- (4.62,-4);
\draw (2.31,0) node[circle,draw=black, fill=white,inner sep=1pt,label=above:{$\left(1,0,0\right)$}]{};
\draw (2.31,-2.67) node[circle,fill,inner sep=1pt,label=right:{$\left( \frac 13,\frac 13,\frac13\right)$}]{};
\draw (0,-4) node[circle,draw=black, fill=white,inner sep=1pt,label=below:{$\left(0,1,0 \right)$}]{};
\draw (4.62,-4) node[circle,draw=black, fill=white,inner sep=1pt,label=below:{$\left(0,0,1\right)$}]{};
\draw (2.31,-4) node[circle,draw=black, fill=white,inner sep=1pt,label=below:{$\left(0,\frac{1}{2},\frac{1}{2}\right)$}]{};
\draw (1.16,-2) node[circle,draw=black, fill=white,inner sep=1pt,label=left:{$\left(\frac{1}{2},\frac{1}{2},0\right)$}]{};
\draw (3.46,-2) node[circle,draw=black, fill=white,inner sep=1pt,label=right:{$\left(\frac{1}{2},0,\frac{1}{2}\right)$}]{};
\end{tikzpicture}                
\caption{Geometric view of $CF$s for 4-taxon network models, with dashed lines outlining the simplex $\Delta^2$. The solid line segments represent $CF$s arising from species networks whose unrooted reduced trees of blobs are resolved. The vertical line segment corresponds to $ab|cd$, the upward-sloping one to $ac|bd$, and the downward sloping one to $ad|bc$.
$CF$s off of these lines can only arise from networks with unresolved unrooted reduced trees of blobs, and as shown in \cite{Banos2019} all such points arise from level-1 networks. Networks whose unrooted reduced trees of blobs are unresolved may also produce $CF$s on the line segments, but only for non-generic parameters.
}
\label{fig:simplex}
\end{figure}

The statements of Theorem \ref{thm:CFdetect} and Proposition \ref{prop:Nk}
can be made geometric by plotting $CF$s
\cite{MAR2019,Banos2019,ABR2019,AMR2022}.
A $CF$ is a point in the interior of the 2-dimensional
probability simplex,
$$\Delta^2=\left \{(p_1,p_2,p_3)\mid p_i\ge 0,\sum p_i=1\right \}.$$  
Figure \ref{fig:simplex} gives a depiction of $\Delta^2$, with the
three blue line segments within it showing the locations of cut $CF$s. 
If the unrooted reduced tree of
blobs of a 4-taxon network is $ab|cd$, then $CF_{abcd}$ lies on the
vertical line segment shown in the figure, and every
point on this line segment within the simplex arises from some such
network. The other line segments in the simplex similarly show values of
$CF_{abcd}$ arising from networks with unrooted reduced trees of blobs
$ac|bd$ and $ad|bc$.
Points in the simplex off these line segments arise as $CF$s only
for networks whose unrooted reduced trees of blobs are unresolved. By
\cite{Banos2019}, all points off the line segments arise from
level-1 networks with 4-cycles.
Although a network with a more complicated 4-blob
may produce a $CF$ on the line segments for certain numerical parameters, this cannot
happen for generic parameters by Theorem \ref{thm:CFdetect}.

\section {Blob quartet identifiability on large networks} \label{sec:BlobquartetsP}

Theorem \ref{thm:CFdetect} will be applied to the induced network
on four taxa arising from a larger $n$-taxon network. The $CF$s computed from the
induced 4-taxon networks are the same as gene tree probabilities from the large network marginalized to 4-taxon sets, by the
structure of the NMSC model. However, since four taxa which form a
B-quartet on a large network may not do so on an induced one,
determining B-quartets on a large network generally requires
additional arguments, which are developed in this section.

The following lemma leads to one easy deduction of
B-quartets from those on induced networks.

\begin{lemma} \label{lem:inducedB} Consider a network $N$ with degree-1 nodes bijectively labelled by $X$, and 
a subnetwork $ M$ of $N$ with the restricted labelling of some degree-1 nodes by $Y\subseteq X$.  If a set $S\subseteq Y$ determines a blob on $M$, then $S$ determines a blob on $ N$. Moreover, the incident cut edges of the blob on $N$ leading to elements of $S$ are in $M$.
\end{lemma}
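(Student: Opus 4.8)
The plan is to produce the blob of $N$ determined by $S$ as the unique blob $B'$ of $N$ that contains the blob $B$ of $M$ determined by $S$, and then to show that deleting the cut edges of $N$ incident to $B'$ leaves the elements of $S$ in distinct connected components, along the way identifying those cut edges as edges of $M$.

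First I would use that blobs are precisely the connected components left after deleting all cut edges, so that the blobs of any network partition its node set. Every edge of $B$ lies on a cycle of $M$, hence on a cycle of $N$, so no edge of $B$ is a cut edge of $N$; thus the connected subgraph $B$ survives intact when the cut edges of $N$ are deleted, and therefore lies inside a single blob $B'$ of $N$. I would also record two elementary facts to be used repeatedly: a non-trivial blob has minimum degree at least $2$, so since each $s\in S$ labels a degree-$1$ node of $M$, hence of $N$, we get $s\notin B\subseteq B'$; and, for any blob $C$ of a network, each connected component of the network with the cut edges incident to $C$ deleted, other than $C$ itself, attaches to $C$ by exactly one such cut edge (two would lie on a cycle through $C$, contradicting their being cut edges). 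Write $D_s$ for the component containing $s$ after deleting from $N$ the cut edges incident to $B'$, and $C_s$ for the analogous component of $M$ relative to $B$, with $c_s$ the unique cut edge of $M$ joining $C_s$ to $B$. Because $S$ determines $B$ on $M$, the $C_s$ are pairwise distinct over $s\in S$, hence so are the $c_s$.

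The heart of the matter is, for each $s\in S$, to locate the cut edge of $N$ attaching $D_s$ to $B'$ and to show it lies in $M$. Fix a path $\Pi_s$ in $M$ that runs from $s$ through $C_s$ and then across $c_s$ to its endpoint $u_s\in B$; every edge of $\Pi_s$ is an edge of $M$, and $\Pi_s$ meets $B$ only at $u_s$, so no edge of $\Pi_s$ lies inside $B$. Traversing $\Pi_s$ outward from $s$, let $e_s$ be the first edge whose far endpoint lies in $B'$; it exists since $u_s\in B\subseteq B'$. Then $e_s$ has exactly one endpoint in $B'$, so $e_s$ is a cut edge of $N$ incident to $B'$; and the portion of $\Pi_s$ strictly before $e_s$ uses only edges with both endpoints outside $B'$, so it keeps $s$ in the same component as the near endpoint of $e_s$ once the cut edges incident to $B'$ are removed. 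Hence $e_s$ is the unique cut edge attaching $D_s$ to $B'$, and $e_s\in E(M)$; this already yields the ``moreover'' assertion.

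It then remains to see that the $e_s$ are pairwise distinct, for this forces the $D_s$ to be pairwise distinct and so shows $S$ determines $B'$ on $N$. Since $e_s$ is an edge of $\Pi_s$ and no edge of $\Pi_s$ lies inside $B$, $e_s$ belongs to $E(C_s)\cup\{c_s\}$; as the $C_s$ have pairwise disjoint node sets and the $c_s$ are pairwise distinct, these sets are pairwise disjoint, so $e_s\neq e_t$ whenever $s\neq t$. The step I expect to be the main obstacle — and the reason a naive argument fails — is that a cut edge $c_s$ of $M$ incident to $B$ need not stay a cut edge of $N$: additional edges of $N$ can splice interior portions of the $C_s$ into the larger blob $B'$. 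Following $\Pi_s$ to the first edge leaving $B'$ is exactly the device that absorbs this difficulty, and it is the hypotheses that the elements of $S$ are leaves and that the $C_s$ are node-disjoint that keep the resulting edges $e_s$ distinct.
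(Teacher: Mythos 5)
Your proof is correct and follows essentially the same route as the paper's: embed the blob of $M$ determined by $S$ into a blob of $N$, then use paths in $M$ from each $s\in S$ to that blob (edge-disjoint because they live in the distinct components $C_s$) to produce distinct cut edges of $N$ incident to the larger blob, all lying in $M$. Your version is more explicit about the component decomposition and the uniqueness of attaching cut edges, but the underlying mechanism is identical.
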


\begin{proof}
If  $S$ determines a blob $\mathcal B_0$ on $M$, then there exist undirected paths in $M$ from $\mathcal B_0$ to each $s\in S$, with no edges in common among any pair of paths. But $\mathcal B_0$ is contained in a blob $\mathcal B$ of $N$. For each $s\in S$, the path from $\mathcal B_0$ to $s$ may include some edges in $\mathcal B$, but it has a subpath from $\mathcal B$ to $s$ entirely outside of $\mathcal B$. Moreover, these subpaths for different $s$ have no edges in common, and must thus pass through distinct cut edges incident to $\mathcal B$. Hence $S$ determines $\mathcal B$, and the incident cut edges leading to each $s$ are in $M$.
\qed\end{proof}

To apply this to induced phylogenetic networks on subsets of taxa, observe that induced networks are obtained from subnetworks
by suppressing degree-2 nodes. Under this operation, blobs pass to blobs, and cut edges to cut edges. Thus we have the following.

\begin{corollary} \label{cor:inducedB} Let $\mathcal N^+$ be a rooted binary phylogenetic network on $X$, and  $\mathcal M^+$ the induced network on $Y\subset X$. Then any
B-quartet on $\mathcal M^+$ is a B-quartet on $\mathcal N^+$.
\end{corollary}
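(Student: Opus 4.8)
The plan is to factor the passage from $\mathcal N^+$ to $\mathcal M^+$ through the intermediate graph $N'$ obtained from $\mathcal N^+$ by the first step of Definition \ref{def:induced}, namely retaining exactly the nodes and edges ancestral to one or more taxa in $Y$. Then $N'$ is a connected subnetwork of $\mathcal N^+$ (every retained node lies on a directed path from the root to some taxon of $Y$, and every node and edge of such a path is itself retained), its degree-$1$ nodes include the leaves labelled by $Y$, and $\mathcal M^+$ is obtained from $N'$ by suppressing all nodes of in- and out-degree $1$. By Lemma \ref{lem:inducedB}, applied with $N=\mathcal N^+$ and $M=N'$, any four-element set $Q\subseteq Y$ that determines a blob on $N'$ also determines a blob on $\mathcal N^+$; such a blob has at least four incident cut edges (one on the path to each taxon of $Q$), hence is an $m$-blob with $m\ge4$, so $Q$ is then a B-quartet on $\mathcal N^+$. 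It therefore suffices to show that a B-quartet on $\mathcal M^+$ determines a blob on $N'$.

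For this I would record how blobs and cut edges transform under suppression of a single node $v$ of in- and out-degree $1$. Its two incident edges are either both cut or both non-cut: any cycle through one of them enters and leaves $v$ along its only two edges, hence uses the other. So the blob decomposition is essentially unchanged --- blobs avoiding $v$ persist, a non-trivial blob containing $v$ merely loses the node $v$, and cut edges map to cut edges. Iterating over all suppressed nodes gives a bijection between the blobs of $\mathcal M^+$ and the blobs of $N'$; under it the cut edges incident to a blob $\mathcal B$ of $\mathcal M^+$ correspond bijectively to the cut edges incident to the matching blob $\mathcal B'$ of $N'$, a cut edge of $\mathcal M^+$ incident to $\mathcal B$ becoming, in $N'$, a path whose first edge is the corresponding cut edge incident to $\mathcal B'$.

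Now suppose $Q=\{a,b,c,d\}\subseteq Y$ is a B-quartet on $\mathcal M^+$, so it determines a blob $\mathcal B$: deleting the cut edges incident to $\mathcal B$ leaves $a,b,c,d$ in four distinct connected components. Let $\mathcal B'$ be the blob of $N'$ corresponding to $\mathcal B$. Deleting the cut edges of $N'$ incident to $\mathcal B'$ disconnects $N'$ into exactly the same taxon groups, because a cut edge already separates its two sides and the suppressed degree-$2$ nodes lying along the edges of $\mathcal M^+$ do not alter which taxa lie on which side. Hence $Q$ determines $\mathcal B'$ on $N'$, and by the reduction above $Q$ is a B-quartet on $\mathcal N^+$.

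The only genuine bookkeeping --- and thus the ``main obstacle,'' such as it is --- is verifying that suppression of degree-$2$ nodes neither creates nor destroys the separation of $Q$ obtained by deleting a blob's incident cut edges. Once the elementary cycle observation above is in hand this is routine, and the real content of the corollary is carried entirely by Lemma \ref{lem:inducedB}.
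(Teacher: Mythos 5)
Your proof is correct and follows essentially the same route as the paper, which derives the corollary from Lemma \ref{lem:inducedB} together with the observation that induced networks arise from subnetworks by suppressing degree-$2$ nodes, an operation under which blobs pass to blobs and cut edges to cut edges. You simply spell out that bookkeeping in more detail than the paper does.
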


To identify additional B-quartets from those identified by
Theorem \ref{thm:CFdetect} and Corollary \ref{cor:inducedB}, we develop
an inference rule. To state it concisely, we say
taxa $a,b$ are \emph{separated} in a
 resolved quartet if they lie in different cherries.   
Thus the
  taxa $a,b$ are separated in $ac|bd$ and $ad|bc$, but are not
  separated in $ab|cd$.

\begin{theorem} \label{thm:infrule} (B-quartet Inference Rule)
  Consider a rooted binary phylogenetic network $\mathcal N^+$ on $n$ taxa,
  $n\ge 5$.  Suppose that $\{a,b,c,d\}$ and $\{b,c,d,e\}$ are
  B-quartets on $\mathcal N^+$. If on the induced 4-taxon network any one of
  $\{a,b,c,e\}$, $\{a,b,d,e\}$, or $\{a,c,d,e\}$ is
\begin{quote}
\begin{itemize}
\item[(a)] \label{case1} a T-quartet, with $a,e$ separated in
the reduced unrooted tree of blobs for the induced 4-taxon network, or
\item[(b)] \label{case2} a B-quartet,
\end{itemize}
\end{quote}
then all of $\{a,b,c,e\}$, $\{a,b,d,e\}$, and $\{a,c,d,e\}$ are
B-quartets on $\mathcal N^+$.
\end{theorem}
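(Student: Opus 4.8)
The plan is to work directly with the blob structure of $\mathcal N^+$ and its tree of blobs $T = T_{rd}(\mathcal N^-)$, translating the hypotheses into statements about displayed 4-taxon topologies of $T$ via Lemma \ref{lem:4netblobs} and the equivalent characterizations of B- and T-quartets. Recall that a set of four taxa $Q$ is a B-quartet on $\mathcal N^+$ if and only if $T$ displays the unresolved quartet on $Q$, and a T-quartet if $T$ displays a resolved quartet on $Q$; moreover, by Corollary \ref{cor:inducedB}, being a B-quartet on the induced 4-taxon network forces it on $\mathcal N^+$, so in case (b) we are already done for that particular triple and the content is in propagating to the other two triples. The first step is therefore to establish the following reduction: it suffices to show that \emph{one} of $\{a,b,c,e\}$, $\{a,b,d,e\}$, $\{a,c,d,e\}$ being a B-quartet on $\mathcal N^+$ forces all three to be. Given the hypotheses on $\{a,b,c,d\}$ and $\{b,c,d,e\}$, this is a purely tree-combinatorial fact about $T$.

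The key combinatorial lemma I would isolate is about a (possibly non-binary) tree $T$ on leaf set containing $\{a,b,c,d,e\}$: if $T$ displays the unresolved quartet $abcd$ and the unresolved quartet $bcde$, then consider where $a$, $e$, and the bundle $\{b,c,d\}$ attach. From $abcd$ unresolved, the paths from $a$, $b$, $c$, $d$ to their "median region" do not isolate any pair; likewise for $bcde$. I would argue that in $T$ the taxa $b,c,d$ must all hang off a single node $w$ (the node realizing the unresolved quartet $bcd\,?$ for both $?=a$ and $?=e$), so that $a$ and $e$ both attach to $w$ as well — giving that $\{a,b,c,e\}$, $\{a,b,d,e\}$, $\{a,c,d,e\}$ are \emph{all} unresolved on $T$, hence all B-quartets — UNLESS $a$ and $e$ attach to $T$ through the same incident edge of $w$, i.e., the path from $a$ to $w$ and the path from $e$ to $w$ share their first edge. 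This dichotomy is the heart of the matter: either all three target quartets are already B-quartets on $\mathcal N^+$ (and we're done trivially), or $a$ and $e$ are "on the same side," in which case on $T$ the three quartets $\{a,b,c,e\}$, etc., are all resolved with $a,e$ together in a cherry — that is, $a,e$ are \emph{not} separated in any of them.

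Now I bring in the hypothesis. In case (a), we are told that on the \emph{induced} 4-taxon network one of these three quartets is a T-quartet with $a,e$ \emph{separated}; since a T-quartet on an induced network is a T-quartet on $\mathcal N^+$ with the same associated resolved quartet (a T-quartet stays a T-quartet on induced networks, as noted after Definition \ref{def:Tquartet}, and $\mathcal N^+$'s tree of blobs displays the same resolved quartet), this contradicts the "same side" alternative in which $a,e$ would be together. In case (b), that quartet is a B-quartet on the induced network, hence on $\mathcal N^+$ by Corollary \ref{cor:inducedB}, hence unresolved on $T$ — but the "same side" alternative makes it resolved, again a contradiction. So in either case the "same side" alternative is impossible, leaving only the alternative in which all three of $\{a,b,c,e\}$, $\{a,b,d,e\}$, $\{a,c,d,e\}$ are unresolved on $T$, i.e., are B-quartets on $\mathcal N^+$, as claimed.

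The main obstacle is the combinatorial lemma: carefully verifying the dichotomy (all of $b,c,d$ at one node $w$, and then $a,e$ either both at $w$ "spreading out" or both entering through one edge of $w$) from the two unresolved-quartet hypotheses, handling the possibility that $w$ has high degree and that other taxa intervene, and making sure no intermediate case (e.g., $b,c,d$ not all at a common node) survives. The cleanest route is probably to phrase it in terms of the induced subtree $T|_{\{a,b,c,d,e\}}$ — a tree on five leaves — and enumerate its possible shapes subject to inducing unresolved $abcd$ and unresolved $bcde$; there are only a handful, and in each one the status of the three $e$-containing quartets on $\{a,c,d\}\cup\{e\}$ etc. is immediate, yielding exactly the stated dichotomy. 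Everything else (the two contradiction arguments, the reduction to "one implies all three") is short once that lemma is in hand.
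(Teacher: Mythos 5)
Your proposal follows essentially the same route as the paper's proof: the taxa $b,c,d$ determine a node $v$ of the tree of blobs of $\mathcal N^+$, the two B-quartet hypotheses force the paths from $a$ and from $e$ to first meet the $b,c,d$ paths at $v$, and the only question is whether the $a$- and $e$-paths share an edge before reaching $v$ (your ``same side'' alternative) or arrive along distinct edges, in which case all three target quartets are unresolved on the tree of blobs and hence B-quartets. The paper's version of your combinatorial lemma is the two-sentence observation that if the $a$- and $e$-paths meet before $v$, then some edge of the tree of blobs, hence some cut edge of $\mathcal N^+$, separates $\{a,e\}$ from $\{b,c,d\}$; your 5-leaf induced-subtree enumeration is a more laborious way to the same dichotomy.

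One direction-of-implication slip needs fixing. To derive the contradiction in case (a) you assert that a T-quartet on the induced 4-taxon network is a T-quartet on $\mathcal N^+$ with the same associated resolved quartet. That converse is false --- the whole point of the theorem is that B-quartets on $\mathcal N^+$ can appear as T-quartets on induced networks (e.g., $\{a,b,d,e\}$ in Figure \ref{fig::BTquartets}). The implication that is true, and that you in fact cite in your parenthetical, runs the other way: under the ``same side'' alternative a cut edge of $\mathcal N^+$ puts $a,e$ on one side and the two chosen taxa from $\{b,c,d\}$ on the other, so the quartet is a T-quartet on $\mathcal N^+$ with $a,e$ not separated, hence a T-quartet on the induced network with $a,e$ not separated, contradicting hypothesis (a). Run the implication in that direction and your argument is complete and matches the paper's.
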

\begin{proof} The taxa $b,c,d$ determine a blob in $\mathcal N^+$,
  corresponding to a node $v$ in its tree of blobs. But since $\{a,b,c,d\}$
  and $\{b,c,d,e\}$ are B-quartets, undirected paths in the tree of blobs
  from the taxa $a$ and $e$ also first meet those from $b,c,d$ at
  $v$. The conclusion will follow from showing the paths from $a$ and
  $e$ to $v$ do not meet each other before $v$, so that all 5 paths
  from $a,b,c,d,e$ first meet at $v$.

  Suppose the paths from $a,e$ do meet before $v$. Then there is an
  edge in the tree of blobs, and hence a cut edge in the network,
  that separates $a,e$ from $b,c,d$. This implies that picking
  any two of $b,c,d$, the taxa $a,e$ are not separated in the
4-taxon tree of blobs, nor do they form a B-quartet with $a,e$.  \qed\end{proof}

For example, for the network of Figure \ref{fig::BTquartets} both $\{a,b,c,d\}$ and $\{b,c,d,e\}$ are $CF$-detectable B-quartets.
While $\{a,b,d,e\}$ is not a $CF$-detectable B-quartet,  since $CF_{abde}$ is strictly $ab|de$-cut, applying Theorem \ref{thm:infrule} shows that it is a B-quartet.

\smallskip

To show that the previous propositions give sufficient tools to detect
all B-quartets for generic parameters, we use the
following lemma.

\begin{figure}
\begin{center}
\includegraphics[width=8cm]{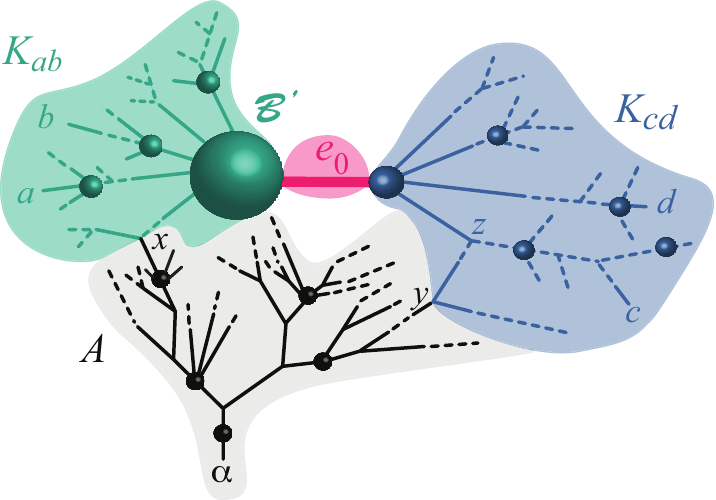}
\end{center}
\caption{A schematic of the network $\mathcal N^+$, as described in Lemma $\ref{lem:main}$. Edges are 
partitioned into four color-coded sets.   
Black edges are ancestral to the taxon $\alpha$ 
and no other taxa, forming the subnetwork $A$. Non-black edges form the
subnetwork $N'$,  in which the blob $\mathcal B'$ is determined by $\{a,b,c\}$.
The red edge $e_0$ incident to $\mathcal B'$ is a cut edge of $\mathcal N'$, separating the connected components $K_{ab}$ and $K_{cd}$, shown in green and blue, respectively. The root of $\mathcal N^+$ might be in either $K_{ab}$ or $K_{cd}$.
The nodes $x,y,z$ are described in the proof of the lemma. }\label{fig:lemmain}
 \end{figure}

\begin{lemma} \label{lem:main}
Let $\mathcal N^+$ be a rooted binary phylogenetic network on taxa $X$ for which $\{a,b,c,d\}$ is a 
B-quartet, and suppose for some $\alpha\in X$  the induced network $\mathcal N'$ on $X\smallsetminus \{\alpha\}$ has a cut edge in $\mathcal N'$ separating $a,b$ from $c,d$.
Then $\{a,b,c,\alpha\}$ is a B-quartet on the induced network $\mathcal M^+$ on $X\smallsetminus \{d\}$. 
\end{lemma}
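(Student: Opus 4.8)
Our goal is to exhibit, inside the induced network $\mathcal M^+$ on $X\smallsetminus\{d\}$, a blob determined by $\{a,b,c,\alpha\}$. The plan starts from the consequence of the B-quartet hypothesis: if $\mathcal B$ is the blob of $\mathcal N^+$ determined by $\{a,b,c,d\}$, then $a,b,c,d$ reach $\mathcal B$ along edge-disjoint paths through four distinct incident cut edges, and deleting any single cut edge of $\mathcal N^+$ severs at most one of these paths from $\mathcal B$; hence \emph{no} cut edge of $\mathcal N^+$ separates $\{a,b\}$ from $\{c,d\}$. Since $\mathcal N'$, obtained from $\mathcal N^+$ by discarding the set $A$ of edges ancestral only to $\alpha$ and then suppressing degree-$2$ nodes, does have such a cut edge $e_0$, the creation of $e_0$ must be due to removing $A$. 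Concretely, $\mathcal N^+$ contains a path $\rho$ joining the two sides $K_{ab}$ and $K_{cd}$ of $e_0$ that avoids $e_0$ and, off of $\mathcal N'$, uses only edges of $A$.

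Next I would locate $\alpha$. Every edge of $A$ lies on a directed path within $A$ to the single leaf $\alpha$, so $\rho$ enters $A$ from a node $x\in K_{ab}$ along an out-edge, descends, and re-enters $\mathcal N'$ at a node $z\in K_{cd}$ along an out-edge; its lowest node $y$ therefore has two incoming edges on $\rho$, so $\rho$ never descends below $y$, and $y$ carries a directed path $\pi_\alpha$ to $\alpha$ lying wholly in $A$. (These are the nodes $x,y,z$ of Figure~\ref{fig:lemmain}, and $\mathcal B'$ there is the blob of $\mathcal N'$ determined by $\{a,b,c\}$: it lies in $K_{ab}$, with the paths from it to $a,b$ staying in $K_{ab}$ and the path to $c$ crossing $e_0$. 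One normalizes $e_0$ to be the incident cut edge of $\mathcal B'$ toward $c$, so that $\mathcal B'$ and $e_0$ share a vertex.)

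Now I would pass to $\mathcal M^+$. The edges of $\rho$ and of $\pi_\alpha$ are ancestral to $\alpha$, and $e_0$ is ancestral to a taxon other than $d$, so all of these persist in $\mathcal M^+$; likewise $\mathcal B'$ persists, since an edge of $K_{ab}$ ancestral only to $d$ would force the $K_{ab}$-endpoint of $e_0$ to be ancestral only to $d$, contradicting that it is ancestral to $c$ through $e_0$. Hence in $\mathcal M^+$ the edge $e_0$ is not a cut edge — deleting it still leaves $K_{ab}$ and $K_{cd}$ joined through $y$ via $\rho$ — so $e_0$ lies in a blob $\widehat{\mathcal B}$ of $\mathcal M^+$. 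Using that two $2$-edge-connected subgraphs sharing a vertex lie in a common blob, $\widehat{\mathcal B}$ contains $\mathcal B'$, and it also contains $y$ since $y$ lies on a cycle through $e_0$. Finally I would follow the truncated paths from $\widehat{\mathcal B}$ to $a$, $b$, $c$, inherited from the determination of $\mathcal B'$ together with $e_0$ and lying in $\mathcal N'$, and the initial segment of $\pi_\alpha$ leaving $\widehat{\mathcal B}$: any edge at which such a path first exits a blob is a cut edge incident to it; the three $\mathcal N'$-paths inherit edge-disjointness, hence give distinct incident cut edges, from the determination of $\mathcal B'$, and $\pi_\alpha$ lies in $A$ and so is edge-disjoint from them. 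This produces four distinct incident cut edges of $\widehat{\mathcal B}$ leaving $a,b,c,\alpha$ in four distinct components of $\mathcal M^+$, i.e., $\{a,b,c,\alpha\}$ is a B-quartet there.

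The main obstacle is the bookkeeping for the two ``induce'' operations in the last step: one must check that deleting the edges ancestral only to $d$ fragments neither $\mathcal B'$ nor the pendant paths (choosing paths to a leaf among edges ancestral to that leaf, and using that the relevant $d$-only edges cannot occur in $K_{ab}$), and that $\widehat{\mathcal B}$ is genuinely a single blob of $\mathcal M^+$ rather than several pieces. One must also handle the choice of $e_0$ relative to $\mathcal B'$, the two possible locations of the root (in $K_{ab}$ or $K_{cd}$, which affects the orientation of $e_0$), and the degenerate case where $\mathcal B'$ is a single node; these cost a remark each but no substantive work.
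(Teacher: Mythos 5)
Your argument is correct and follows essentially the same route as the paper's proof: the same decomposition of $\mathcal N^+$ into $A$, $N'$, $K_{ab}$, $e_0$, $K_{cd}$, the same use of the B-quartet hypothesis to force a connection between $K_{ab}$ and $K_{cd}$ through $A$, the same cycle through $e_0$ surviving in the subnetwork $M$ ancestral to taxa other than $d$, and the same final appeal to the fact that a set determining a blob of a subnetwork determines a blob of the supernetwork (the paper's Lemma \ref{lem:inducedB}). The one place the paper does substantive work that you gloss over is in closing the cycle on the $K_{cd}$ side: the segment from $z$ back to the $K_{cd}$-endpoint of $e_0$ must avoid edges ancestral only to $d$ (which certainly exist in $K_{cd}$), and the paper arranges this by routing through a least common ancestor of $y$ and $c$ and invoking connectedness of the set of edges not ancestral only to $d$; your flagged bookkeeping items address $\mathcal B'$, the pendant paths, and $K_{ab}$, but not this $K_{cd}$ step, so it should be added to your list and discharged by the same connectivity argument.
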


\begin{proof} 
We define several subnetworks of $\mathcal N^+$, with Figure \ref{fig:lemmain} provided to assist the reader.
Let $A$ be the connected subnetwork of $\mathcal N^+$ whose edges are those ancestral to the taxon $\alpha$ but no other taxa. Let  $N'$ be the connected subnetwork of $\mathcal N^+$ whose edges are those ancestral to at least one taxon other than $\alpha$,
and $M$ the connected subnetwork of $\mathcal N^+$ whose edges are ancestral to at least one taxon other than $d$.
Note that $N', M$ yield the induced networks $\mathcal N',\mathcal M^+$ on $X\smallsetminus \{\alpha\}, X\smallsetminus \{d\}$ by suppressing degree-2 nodes, and $\mathcal N^+=N'\cup A$. 
 
Let  $\mathcal B'$ be the blob in $ N'$ determined by $a,b$, and $c$, and let $e_0$ be the cut edge of $N'$ incident to $\mathcal B'$ through which paths to $c$ pass. Thus $e_0$ also separates $a,b,$ from $c,d$ in $N'$. Let $K_{ab}$ (respectively $K_{cd}$) denote the connected component of $N'\smallsetminus\{e_0\}$ containing $a,b$ (respectively $c,d$). Then the edges of the four connected subnetworks $A$, $K_{ab}$, $\{e_0\},$  and $K_{bc}$ partition the edges of $\mathcal N^+$, as shown in Figure \ref{fig:lemmain}. We now construct a cycle in $\mathcal N^+$ through these subnetworks with certain features.

First, there is an undirected path $P_1$ entirely within $A$ from  $\alpha$ to a node $x$ in $K_{ab}$. If this were not the case, then all paths from  $\alpha$ to $N'$ within $A$ would end at nodes in $K_{cd}$. But then $e_0$ would separate $a,b$ from $c,d,\alpha$ in $\mathcal N^+$, contradicting that $\{a,b,c,d\}$ is a B-quartet on $\mathcal N^+$.

There is also a path $P_2$ from $x$ to $e_0$ in $K_{ab}$, by the connectedness of $K_{ab}$.

By a similar argument to that for $P_1$, there is an undirected path from a node $y\in K_{cd}$ to $\alpha$ within $A$. Because $y \in K_{cd}$, $y$ is ancestral to a taxon other than $\alpha$. Since the edge in the path incident to $y$ is ancestral only to $\alpha$, that edge's parent node must be $y$ and there is a directed path from $y$ to $\alpha$ within $A$. Choose some such directed path. 

The nodes $y$ and $c$ must have a common ancestor in $K_{cd}$, since either the root of $\mathcal N'$ is in $K_{cd}$ or any directed path from the root to any node in  $K_{cd}$ passes through $e_0$ and the child node of $e_0$ is such an ancestor. Choosing $z$ 
as a least common ancestor of $y,c$ in $K_{cd}$ (i.e., a common ancestor with no descendent that is a common ancestor), and a directed path from $z$ to $y$, we form a combined directed path $P_4$ from $z$ through $y$ to $\alpha$, with all edges ancestral to $\alpha$. 

If all edges and nodes ancestral only to $d$ are deleted from $N'$, the network remains connected and contains both $z$  and $e_0$. Thus there is a path  $P_3$ from $e_0$ to $z$ in $K_{cd}$ with no edges that are ancestral to only the taxon $d$.

Combining the paths $P_1$, $P_2$,  the edge $e_0$, $P_3$, and  $P_4$, and removing edges to eliminate any self-intersections, yields a cycle $C$ in $\mathcal N^+$ which passes through $A$, $K_{ab}$, $e_0$, and $K_{cd}$. 
This cycle also lies in $M$, as none of its edges are ancestral only to the taxon $d$. Since $\mathcal B'$ also lies in $M$, and the cycle $C$ and $\mathcal B'$ intersect, they lie in the same blob $\mathcal B$ of $M$.

It remains to show that $\{a,b,c, \alpha\}$ determines $\mathcal B$ in $M$, and hence is a B-quartet on $\mathcal M^+$.
Since $\{a,b,c\}$ determines $\mathcal B'$ in $N'$  and hence in $M\cap N'$, by Lemma \ref{lem:inducedB} $\{a,b,c\}$ determines $\mathcal B$ in $M$ with incident cut edges leading to $a,b,c$ in $M\cap N'\subset N'$.  But the initial segment of $P_1$ gives a path from $\alpha$ to $C$. Since this path lies entirely in $A$, the cut edge incident to $\mathcal B$ that leads to $\alpha$ must be in $A$, and is therefore distinct from those to $a,b,c$. Thus $\{a,b,c,\alpha\}$
determines $\mathcal B$ in $M$.\qed \end{proof}

We arrive at the main result of this section.

\begin{theorem} \label{thm:infer} On an $n$-taxon rooted binary phylogenetic
  network $\mathcal N^+$ with generic numerical parameters, all B-quartets can
  be identified from the quartet $CF$s using $CF$-detectability
  (Theorem \ref{thm:CFdetect}) and applications of the B-quartet
  Inference Rule (Theorem \ref{thm:infrule}).
\end{theorem}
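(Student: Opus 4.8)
The plan is to establish the nontrivial completeness direction — that every B-quartet of $\mathcal N^+$ gets recorded — by induction on $n=|X|$; soundness is immediate, since everything recorded at the first stage is a B-quartet by Corollary~\ref{cor:inducedB} and the Inference Rule (Theorem~\ref{thm:infrule}) is valid with no genericity hypothesis. Write $\mathcal P$ for the procedure: apply $CF$-detectability (Theorem~\ref{thm:CFdetect}) to every $4$-subset $R\subseteq X$, recording $R$ as a B-quartet when $CF_R$ is non-cut (so $R$ is a B-quartet of $\mathcal N^+$ by Corollary~\ref{cor:inducedB}), and then apply the Inference Rule repeatedly — each application consuming two already-recorded B-quartets and the $CF$-detectable type of a $4$-subset — until nothing new appears. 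The only genericity I will need is that of Corollary~\ref{cor:largeCFdetect}, namely that $CF$-detectability holds on every induced $4$-taxon network of $\mathcal N^+$; this excludes a measure-zero set, and since the induced $4$-taxon networks of any $\mathcal N^+_Y$ are among those of $\mathcal N^+$, with $CF_R(\mathcal N^+)=CF_R(\mathcal N^+_R)$, the condition descends to every induced subnetwork. The other ingredient I would set up first is a \emph{transfer principle}: because inducing is transitive and $4$-taxon $CF$s are invariant under passing to induced networks, any derivation by which $\mathcal P$ run on some $\mathcal N^+_Y$ records a B-quartet can be replayed step for step by $\mathcal P$ run on $\mathcal N^+$ — each $CF$-detectability step concerns a $4$-subset with the same $CF$, and each Inference Rule step has hypotheses asserting B-quartet status of the ambient network, which lift from $\mathcal N^+_Y$ to $\mathcal N^+$ by Corollary~\ref{cor:inducedB}, the conclusion being a genuine B-quartet of $\mathcal N^+$. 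The base case $n=4$ is immediate: the sole candidate is $X$ itself, recorded by $CF$-detectability exactly when it is a B-quartet.

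For the inductive step I would assume completeness for all networks on fewer than $n$ taxa satisfying the genericity hypothesis ($n\ge 5$), take a B-quartet $Q=\{a,b,c,d\}$ of $\mathcal N^+$, and split into cases. If $Q$ is a B-quartet on the induced $4$-taxon network $\mathcal N^+_Q$, then $\mathcal P$ records it immediately. Otherwise $Q$ is a T-quartet on $\mathcal N^+_Q$; relabel so that its associated resolved topology is $ab|cd$. Fix any $\alpha\in X\smallsetminus Q$ (possible since $n\ge5$) and put $\mathcal N'=\mathcal N^+_{X\smallsetminus\{\alpha\}}$, an $(n-1)$-taxon network inheriting the genericity hypothesis. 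If $Q$ is still a B-quartet on $\mathcal N'$, the inductive hypothesis records $Q$ when $\mathcal P$ is run on $\mathcal N'$, hence — by the transfer principle — when $\mathcal P$ is run on $\mathcal N^+$, and that case is done.

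The remaining case, which I expect to be the crux, is when $Q$ is a T-quartet on $\mathcal N'$ too. Since $\mathcal N^+_Q$ is induced from $\mathcal N'$, and the cut edge realizing a T-quartet's split in $\mathcal N'$ is ancestral to two of the four taxa and hence survives (with the same split) in $\mathcal N^+_Q$, the associated topology of $Q$ on $\mathcal N'$ is again $ab|cd$; in particular $\mathcal N'$ has a cut edge separating $a,b$ from $c,d$. I would then invoke Lemma~\ref{lem:main} twice, with the taxa assigned so that this cut edge plays the required separating role, to conclude that $\{a,b,d,\alpha\}$ is a B-quartet on $\mathcal N^+_{X\smallsetminus\{c\}}$ and $\{b,c,d,\alpha\}$ is a B-quartet on $\mathcal N^+_{X\smallsetminus\{a\}}$. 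These are $(n-1)$-taxon networks inheriting the genericity hypothesis, so by the inductive hypothesis $\mathcal P$ records each of $\{a,b,d,\alpha\}$ and $\{b,c,d,\alpha\}$ on the respective network, and hence — by transfer — on $\mathcal N^+$. Now $\{a,b,d,\alpha\}$ and $\{b,c,d,\alpha\}$ share the triple $\{b,d,\alpha\}$ with distinguished taxa $a$ and $c$, and the Inference Rule applied to this pair outputs $\{a,b,c,d\}$, $\{a,b,c,\alpha\}$, and $\{a,c,d,\alpha\}$ as B-quartets once its trigger is met — which it is, because on $\mathcal N^+_Q$ the quartet $\{a,b,c,d\}$ is a T-quartet with associated topology $ab|cd$ in which $a$ and $c$ lie in different cherries, hence are separated. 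Thus $\mathcal P$ records $Q$, which would complete the induction.

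The main obstacle I anticipate is precisely this last case: $Q$ is a genuine B-quartet of $\mathcal N^+$ yet a cut (so ``invisible'') T-quartet both on $\mathcal N^+_Q$ and on the $(n-1)$-taxon network $\mathcal N'$ obtained by deleting the auxiliary taxon $\alpha$, so that $CF$-detectability offers no direct purchase. Everything then hinges on Lemma~\ref{lem:main}, which converts the hidden B-quartet $\{a,b,c,d\}$ into B-quartets of the shape $\{\cdot,\cdot,\cdot,\alpha\}$ on networks with one fewer taxon — objects the induction can certify — after which a single application of the Inference Rule reassembles $Q$, the rule's ``separated distinguished pair'' hypothesis being supplied precisely by the $ab|cd$ cut type of the hidden quartet. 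The surrounding obligations — that the recursion strictly decreases the taxon count and so terminates, that the genericity hypothesis descends to every induced subnetwork, and that derivations on induced subnetworks replay on $\mathcal N^+$ — should be routine once the transfer principle, transitivity of inducing, invariance of $4$-taxon $CF$s, and Corollary~\ref{cor:inducedB} are in hand.
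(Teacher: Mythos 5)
Your proposal is correct and follows essentially the same strategy as the paper's proof: induction on $n$, with the crux case (a B-quartet that is cut on both $\mathcal N^+_Q$ and $\mathcal N^+_{X\smallsetminus\{\alpha\}}$) resolved by two applications of Lemma~\ref{lem:main} followed by one application of the Inference Rule, triggered by the strictly cut $CF$ of the hidden quartet. The only differences are minor: the paper first reduces to a network in which the blob determined by $\{a,b,c,d\}$ has exactly one taxon hanging off each incident cut edge before fixing $\alpha$, whereas you case directly on whether $Q$ survives deletion of an arbitrary $\alpha$ (a valid streamlining), and you feed the rule the pair $\{a,b,d,\alpha\},\{b,c,d,\alpha\}$ with separated pair $a,c$ where the paper uses $\{a,b,c,\alpha\},\{b,c,d,\alpha\}$ with separated pair $a,d$ --- both choices work.
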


\begin{proof} By Corollary \ref{cor:largeCFdetect}, for generic
  parameters we may identify the topologies of the reduced unrooted
  trees of blobs of all induced networks on four taxa. Since the
  B-quartet Inference Rule does not depend on parameters, we have  only
  to show that this information together with the inference rule is enough to identify all
  B-quartets.

  We proceed by induction on the number $n$ of taxa on the network
  $\mathcal N^+$, with the base case of $n=4$ established. Inductively
  assume that the result holds for networks with fewer than $n$ taxa,
  and consider $\mathcal N^+$ with $n\ge 5$ taxa.

  Suppose $\{a,b,c,d\}$ is a B-quartet on $\mathcal N^+$, determining a blob $\mathcal B$. Then consider the connected components of the graph obtained by deleting $\mathcal B$. Choose one taxon
 from each component which contains a taxon, with four of these being
  $a,b,c,d$. Passing to the induced network on those taxa, all edges in $\mathcal B$ are retained. If this network has fewer than $n$ taxa, then the
  inductive hypothesis gives that $\{a,b,c,d\}$ can be identified as
  a B-quartet on it, and by Corollary \ref{cor:inducedB} on
  $\mathcal N^+$.

  If the number of taxa was not decreased, then $\mathcal N^+$ has a
  relatively simple structure: Its LSA network $\mathcal N^\oplus$ contains the blob $\mathcal B$ with $n$ incident cut edges.
  If the LSA is in $\mathcal B$, then 
the incident cut edges connect to (possibly empty) chains of 2-blobs leading to leaves.  
 If the LSA is not  in $\mathcal B$, then
$n-1$ incident cut edges connect to  chains of 2-blobs leading to leaves, and one connects through a chain of 2 blobs to a 3-blob containing the LSA, which connects to another chain of 2-blobs leading to a leaf.
  For a network $\mathcal
  N^+$ of this form if we remove any taxon other than $a,b,c,d$ and pass to the
  induced network, $\{a,b,c,d\}$ either remains a B-quartet or  does
  not. If $\{a,b,c,d\}$ remains a B-quartet, then we may delete that
  taxon, and again obtain the result from the inductive hypothesis.

  Suppose then that no taxon 
  can be removed from $\mathcal N^+$ without $\{a,b,c,d\}$ ceasing to be a B-quartet in the induced network, and
  fix some $\alpha \in X \smallsetminus \{a, b, c, d\}$.
Let $\mathcal N'$ be the induced rooted network on
  $X\smallsetminus \{\alpha\}$.
  Then the blob $\mathcal B$ on $\mathcal
  N^+$ splits into multiple blobs with cut edges joining them on
  $\mathcal N'$, and $\{a,b,c,d\}$ is a T-quartet on $\mathcal N'$.
There must be a cut edge $e_0$ in $\mathcal N'$ that separates two of $a,b,c,d$, say $a,b$,
  from the others, $c,d$.
  Theorem \ref{thm:CFdetect} thus shows
  that $CF_{abcd}$ is $(ab|cd)$-cut. 

  Applying Lemma \ref{lem:main} twice, we conclude that
  $\{a,b,c,\alpha\}$ and $\{b,c,d,\alpha\}$ are B-quartets on the
  networks induced from $\mathcal N^+$ by removing $d$ and $a$
  respectively.  As these are networks on $n-1$ taxa, the inductive
  hypothesis ensures that they can be detected as
  B-quartets. But they must then also be B-quartets on $\mathcal N^+$
  by Corollary \ref{cor:inducedB}. An application of Inference
  Rule (a) of Theorem \ref{thm:infrule} then establishes the
  claim. \qed
\end{proof}

Although the proof of Theorem \ref{thm:infer} shows that only part (a) of
Theorem \ref{thm:infrule} is needed to infer all B-quartets from
those that are $CF$-detectable, part (b) is useful in an inference
algorithm for reducing computational time.

\smallskip

\begin{figure}
	\begin{center}
\includegraphics[width=4.in]{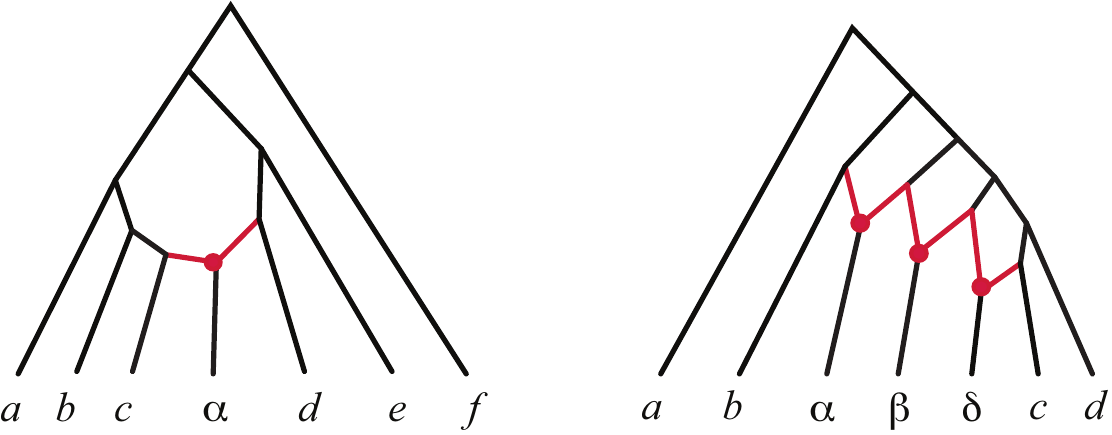}
\caption{(L) A 7-blob with a simple cycle structure. While many of its
  B-quartets are not $CF$-detectable, each can be inferred from
  $CF$-detectable ones by a single application of the B-quartet
  Inference Rule. For instance, $\{a,b,c,d\}$ is a B-quartet although
  $CF_{abcd}$ is $ad|bc$-cut. The inference rule shows that it is a
  B-quartet using the two $CF$-detectable ones, $\{\alpha, a,c,d \}$
  and $\{\alpha,b,c,d\}$. (R) A 7-blob with a more complex
  structure. The B-quartet $\{a,b,c,d\}$ is not $CF$-detectable, but
  three applications of the inference rule allow it to be inferred
  from those that are.}\label{fig:inferProof}
\end{center}
\end{figure}

\medskip

Figure \ref{fig:inferProof} shows several instructive examples of blobs for 
understanding the proof and application of Theorem
\ref{thm:infer}. On the left, a simple 7-cycle relates $a,b,c,d,e,f$
and hybrid taxon $\alpha$. Though $\{a,b,c,d\}$ is a B-quartet, it is
not $CF$-detectable since $CF_{abcd}$ is ${ad|bc}$-cut. To infer that
$\{a,b,c,d\}$ is a B-quartet by the argument of the proof, the taxa
$e,f$ can be ignored, as passing to the induced network without
them leaves $\{a,b,c,d\}$ a B-quartet. The taxon $\alpha$ will be
used, since its deletion would make $\{a,b,c,d\}$ a T-quartet.  The
$CF$s for $\{\alpha, a,c,d \}$ and $\{\alpha,b,c,d\}$ show those sets
are B-quartets, so using that $CF_{abcd}$ is ${ad|bc}$-cut in
inference rule (a) of Theorem \ref{thm:infrule} gives the desired
conclusion. Every other B-quartet for this network can be similarly
inferred using the inference rule once.

A more complicated example with a 7-blob in Figure
\ref{fig:inferProof}(R), illustrates the need for the inductive argument
for Theorem \ref{thm:infer}. Here we explain how to infer that $\{a,b,c,d\}$ is a
B-quartet even though $CF_{abcd}$ is $(ab|cd)$-cut. Note that deletion
of any one of $\alpha, \beta,\delta$ would give an induced network
with $\{a,b,c,d\}$ a T-quartet. We pick any one of these, say $\alpha$, and find that
$\{a,b,c,\alpha\}$ is a B-quartet using its $CF$.  Then, by
considering the induced 6-taxon network on $\{b,c,d,\alpha,\beta,\delta\}$,
which has a 6-blob when unrooted, we see inductively that
$\{\alpha,b,c,d\}$ is a B-quartet, so by the inference rule $\{a,b,c,d\}$ is also.
Tracing through the full argument
for $\{b,c,d,\alpha,\beta,\delta\}$ to explicitly show $\{\alpha,b,c,d\}$ is a B-quartet requires several more applications of the inference rule. 

Of course in an inference algorithm, where the network structure is
not yet known, this analysis is done in the opposite order, by
first finding all $CF$-detectable B-quartets, and then using repeated
applications of the rule to infer new ones until no more can be
produced.

\section{Main result}\label{sec:main}

The identifiability of the tree of blobs of a species network now follows easily.

\begin{theorem} \label{thm:ToBident}
Let $\mathcal N^+$ be a rooted binary phylogenetic network. Then for generic numerical parameters, the reduced unrooted tree of blobs  $T_{rd}(\mathcal N^-)$ is identifiable from the distribution of gene quartet topologies under the NMSC model.
\end{theorem}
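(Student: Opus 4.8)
The plan is to combine the two main ingredients already in hand. First, by Corollary \ref{cor:largeCFdetect}, for generic numerical parameters the distribution of gene quartet topologies determines, for every 4-taxon subset $Q\subseteq X$, the topology of the reduced unrooted tree of blobs $T_{rd}(\mathcal N^-_Q)$ of the induced 4-taxon network; equivalently, it tells us whether $Q$ is a B-quartet or a T-quartet \emph{on the induced network}, and in the latter case which resolved quartet is associated to it. Second, by Theorem \ref{thm:infer}, this induced information, fed into repeated applications of the B-quartet Inference Rule (Theorem \ref{thm:infrule}), identifies exactly which 4-taxon subsets are B-quartets \emph{on the full network} $\mathcal N^+$, and hence also which are T-quartets on $\mathcal N^+$.

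Next I would translate the list of B- and T-quartets on $\mathcal N^+$ into a complete list of quartets displayed on $T_{rd}(\mathcal N^-)$. By the equivalent conditions stated after Definition \ref{def:Bquartet}, $Q=\{a,b,c,d\}$ is a B-quartet on $\mathcal N^+$ precisely when the unresolved quartet $abcd$ is displayed on $T_{rd}(\mathcal N^-)$; and by Definition \ref{def:Tquartet}, if $Q$ is a T-quartet on $\mathcal N^+$ the resolved quartet displayed on $T_{rd}(\mathcal N^-)$ is the one associated to it. The only subtlety is that the resolved quartet associated to a T-quartet is defined via the tree of blobs of $\mathcal N^-$, not the induced $\mathcal N^-_Q$; but since any T-quartet on the large network is also a T-quartet on the induced quartet network (as noted after Definition \ref{def:Tquartet}), and passing to an induced network does not change which resolved quartet a cut edge induces on the separated pairs, the resolved quartet we read off from $T_{rd}(\mathcal N^-_Q)$ via Corollary \ref{cor:largeCFdetect} is exactly the one displayed on $T_{rd}(\mathcal N^-)$. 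Thus we recover the full set of displayed (resolved and unresolved) quartets of $T_{rd}(\mathcal N^-)$.

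Finally, a tree — including a non-binary one such as $T_{rd}(\mathcal N^-)$ — is determined up to isomorphism by its set of displayed quartets, including which 4-taxon subsets are unresolved; this is the quartet encoding of $X$-trees used in \cite{Semple2005,Rhodes2020}. Invoking that result completes the identification of $T_{rd}(\mathcal N^-)$ from the gene quartet distribution. Assembling these steps, the generic exceptional set is the union of the (measure-zero) exceptional sets coming from Corollary \ref{cor:largeCFdetect} applied to each of the finitely many 4-taxon subsets, so it is itself of measure zero, and the theorem follows.

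I do not expect a genuine obstacle here: all the real work has been done in Sections \ref{sec:Blobquartets4} and \ref{sec:BlobquartetsP}, and this final argument is essentially bookkeeping. The one point requiring a little care is the compatibility of the ``associated resolved quartet'' as read from an induced 4-taxon tree of blobs with the one displayed on the full tree of blobs — i.e., checking that deleting taxa cannot change the resolved topology on a T-quartet of $\mathcal N^+$ — but this is immediate from the remark that a T-quartet on a large network remains a T-quartet, with the same associated quartet, on its induced 4-taxon network.
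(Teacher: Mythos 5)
Your proposal is correct and follows essentially the same route as the paper's own proof: combine Theorem \ref{thm:infer} (identifying all B-quartets on $\mathcal N^+$) with Corollary \ref{cor:largeCFdetect} (identifying the resolved topology for T-quartets) to obtain every displayed quartet of $T_{rd}(\mathcal N^-)$, then invoke the quartet encoding of trees from \cite{Semple2005,Rhodes2020}. Your extra care about the compatibility of the associated resolved quartet between the induced and full trees of blobs is a point the paper leaves implicit, but it is handled correctly and does not change the argument.
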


\begin{proof} 
By Theorem \ref{thm:infer}, for generic numerical parameters on $\mathcal N^+$, all B-quartets on $\mathcal N^+$ can be identified from the quartet $CF$s, that is, from the distributions of gene quartet topologies. By Corollary \ref{cor:largeCFdetect}, we can additionally identify the topology of each unrooted reduced quartet tree of blobs if it is resolved.

Treating B-quartets on $\mathcal N^+$ as unresolved, we thus can identify the topology of every displayed quartet tree on $T_{rd}(\mathcal N^-)$. But the collection of displayed quartets determine the tree \cite{Semple2005,Rhodes2020}, so the tree of blobs is identifiable. \qed \end{proof}

This result addresses the theoretical question of whether it is in principle possible to infer $T_{rd}(\mathcal N^-)$ from quartet $CF$s, but its proof also suggests an algorithm for inference of the tree of blobs from data. This will be more completely developed in a subsequent publication, but we outline the steps here.

For a set $\{a,b,c,d\}$ of four taxa, a \emph{quartet count
  concordance factor (qcCF)} is a vector of counts
$(n_{ab|cd},n_{ac|bd}, n_{ad|bc} )$ of unrooted topological quartet
trees. We assume for each given set $\{a,b,c,d\}$ these counts summarize a sample of independent draws under the NMSC.  
For instance, these could be displayed quartets on a collection of independent gene trees on the full set $X$ of taxa, or on subsets of $X$.
While gene trees are not empirically observable, given gene sequence data they may be inferred by standard phylogenetic methods, at the price of introducing inference error.

Beginning with a collection of independent gene trees on $X$, the algorithm proceeds as follows:
\begin{enumerate}
\item Tabulate qcCFs for all sets of four taxa.
\item \label{step:test} Apply a statistical hypothesis test to each qcCF to judge whether the T-quartet model can be rejected. If so, the taxa form a putative B-quartet on the induced 4-taxon network. If not, infer the resolved quartet tree of blobs topology.
\item \label{step:B} Use the B-quartet Inference Rule repeatedly to determine all putative B-quartets on the full network.
\item \label{step:tob} Treating putative B-quartets as unresolved quartet trees and T-quartets as resolved, estimate the unrooted reduced tree of blobs, using the quartet intertaxon distance \cite{Rhodes2020} and a tree-building method.
\end{enumerate}

This algorithm is similar in outline to NANUQ \cite{ABR2019}, which provides for statistically-consistent inference of a network provided it is level-1. However, since it does not attempt to infer any details of blob structure, it avoids the complications of interpreting splits graphs.

Note that step \ref{step:test} requires the development of a novel statistical test similar to the T3 test of \cite{MAR2019}, since the cut model has a singularity at the point $(1/3,1/3,1/3)$. Step \ref{step:B} cannot be done naively, since its computational complexity needs to be controlled for use on large networks. Finally, while numerous methods exist for determining a tree from its displayed quartets, an attractive one here is to use the quartet intertaxon distance of \cite{Rhodes2020} combined with a tree building method such as Neighbor-Joining as a means of addressing potential noise in the quartets and still achieving reasonable runtimes.

\medskip

The tree of blobs shown to be identifiable by Theorem \ref{thm:ToBident}, and estimated by the algorithm sketched above, is of course a topological tree. Researchers might prefer a metric tree of blobs, indicating (in coalescent units) the distance between blobs. For edges between trivial blobs, it is straightforward to see that edge lengths are identifiable, and heuristics such as those used by ASTRAL \cite{SayyariMirarab2016} for species tree inference provide a fast estimate of them. However, if either, or both, endpoints of an edge are in non-trivial blobs, then both identifiability and methods for effective estimation are far from obvious. For example, on the 4-taxon network in Figure \ref{fig:Nk} if the length $\epsilon$ of the edge between the blobs  is held fixed  but $k$ or $M$ varied, the $CF$ varies over a line segment. This segment intersects a similar segment for a nearby value of $\epsilon$. Thus the distance between blobs cannot be identified in this 4-taxon case. This identifiability question for larger networks will also be studied in a future work.

\section{Acknowledgements}

This work was supported by the National Science Foundation, grant  2051760, awarded to JR and EA. EA and JM were also supported by
 NIGMS Institutional Development Award (IDeA), grant 2P20GM103395. HB was supported by the Moore-Simons Project on the Origin of the Eukaryotic Cell, Simons Foundation grant 735923LPI (DOI: https://doi.org/10.46714/735923LPI) awarded to Andrew J. Roger and Edward Susko.


\bibliographystyle{plain}
\bibliography{Hybridization}

\end{document}